\documentclass[11pt]{article}

\usepackage{xargs}[2008/03/08]
\usepackage{amsmath}
\usepackage{amsthm}
\usepackage{amssymb}
\usepackage{stackrel}
\usepackage{mathtools}
\usepackage{enumitem}
\usepackage{hyperref}
\usepackage[left=3.5cm, right=3.5cm]{geometry}
\mathtoolsset{showonlyrefs}
\usepackage{dsfont}

\title{Limits for Rumor Spreading in stochastic populations\footnote{This work has received funding from the European
    Research Council (ERC) under the European Union's Horizon 2020 research and
    innovation programme (grant agreement No 648032).}}
\usepackage{authblk}

\author[a]{Lucas Boczkowski}
\author[c]{Ofer Feinerman}
\author[a]{Amos Korman}
\author[b]{Emanuele Natale}

\affil[a]{CNRS, IRIF, Université Paris Diderot, 75013 Paris, France\\
\texttt{first.last@irif.fr}}
\affil[b]{Max-Planck-Institut für Informatik, 66123 Saarbrücken, Germany\\
\texttt{enatale@mpi-inf.mpg.de}}
\affil[c]{Weizmann Institute of Science, 76100 Rehovot, Israel\\
\texttt{feinermanofer@gmail.com}}


\bibliographystyle{plainurl}
\theoremstyle{plain}
\newtheorem{theorem}{Theorem}[section]

\newtheorem{claim}[theorem]{Claim}
\newtheorem{lemma}[theorem]{Lemma}

\newtheorem{corollary}[theorem]{Corollary}

\newtheorem{remark}[theorem]{Remark}

\theoremstyle{definition}
\newtheorem{definition}[theorem]{Definition}

\usepackage{bbm}
\usepackage{xspace}

\newcommand{\pull}{\ensuremath{\mathcal{PULL}}\xspace}

\newcommand{\push}{\ensuremath{\mathcal{PUSH}}\xspace}
\newcommand{\spull}{\ensuremath{\mathcal{\mbox{\em sequential}\textnormal{-}PULL}}\xspace}
\newcommand{\ppull}{\ensuremath{\mathcal{\mbox{\em parallel}\textnormal{-}PULL}}\xspace}
\newcommand{\ppullk}{\ensuremath{\mathcal{\mbox{\em parallel}\textnormal{-}PULL}}(k)\xspace}

\newcommand{\AVDT}{\tt ACDT}

\renewcommand{\epsilon}{\varepsilon}

\newcommand{\kl}{KL}
\newcommand{\bias}{s}
\newcommand{\type}{\eta}
\newcommand{\epsnorm}{d_{\epsilon}(x^{(<t)})}

\renewcommand{\Pr}{P}

\global\long\def\cond{\ |\ }

\renewcommand{\epsilon}{\varepsilon}
\newcommand{\bx}{\mathbf{x}}
\newcommand\bigO{\mathcal{O}}

\newcommand{\Natural}{\mathbb{N}}
\newcommandx\seqobs[1][usedefault, addprefix=\global, 1=t]{\bx^{(\leq#1)}}
\newcommandx\tildeseqobs[1][usedefault, addprefix=\global, 1=t]{{\tilde \bx}^{(\leq#1)}}
\newcommandx\Seqobs[1][usedefault, addprefix=\global, 1=t]{\mathbf{X}^{(\leq#1)}}
\newcommandx\tildeSeqobs[1][usedefault, addprefix=\global, 1=t]{\mathbf{\tilde X}^{(\leq#1)}}
\global\long\def\rpull{\ensuremath{\mathcal{\mbox{\em broadcast}\textnormal{-}PULL}}\xspace}

\newcommand{\obs}[1]{x^{(#1)}}

\begin{document}
\date{}

\begin{titlepage}
\newgeometry{top=4cm}
\maketitle

\date{}

\def\thefootnote{\fnsymbol{footnote}}

\vspace{0.5cm}
\begin{abstract}\parskip0.08cm
    Biological systems can share and collectively process
    information to yield emergent effects, despite inherent noise in
    communication. While man-made systems often employ intricate structural
    solutions to overcome noise, the structure of many biological systems is
    more amorphous. It is not well understood how communication noise may
    affect the computational repertoire of such groups. To approach this
    question we consider the basic collective task of rumor spreading, 
    in which information from  few knowledgeable sources must reliably flow into the rest of the population. 

    In order to study the effect of communication noise on the 
    ability of groups that lack stable structures to efficiently solve this
    task, we consider a noisy version of the uniform $\mathcal{PULL}$ model.
    We prove a lower bound which implies that, in the presence of even moderate
    levels of noise that affect all facets of the communication, no scheme can
    significantly outperform the trivial
    one in which agents have to wait until directly
    interacting with the sources. 
    Our results thus
    show an exponential separation between the uniform $\mathcal{PUSH}$ and
    $\mathcal{PULL}$
    communication models in the presence of noise. 
    Such separation may be interpreted as suggesting that, in order to achieve efficient rumor
    spreading, a system must exhibit either some degree of structural stability or,
    alternatively, some facet of the communication which is immune to noise. 

    We corroborate our theoretical findings with a new analysis of experimental data
    regarding recruitment  in {\em Cataglyphis niger} desert ants. 
 \end{abstract}

\end{titlepage}
\section{Introduction}
\subsection{Background and motivation}

Systems composed of tiny mobile components must function under conditions of unreliability. In particular, any sharing of information is inevitability subject to communication noise.  The effects of communication noise in distributed living systems appears to be highly variable. 
While some systems disseminate information efficiently and
reliably despite communication noise \cite{abeles1991corticonics,feinerman2008reliable,cavagna2010scale,marras2012information,rosenthal2015revealing},  others generally refrain from acquiring social information, consequently
losing all its potential benefits
\cite{giraldeau2002potential,rieucau2009persuasive,templeton1995patch}. It is
not well understood which characteristics of a distributed system are crucial
in facilitating noise reduction strategies and, conversely, in which systems
such strategies are bound to fail. Progress in this direction may be valuable
towards better understanding  the constraints that govern the evolution of
cooperative biological systems.

Computation under noise has been extensively studied in the  
 computer science community. These studies suggest that different forms of error correction ({\em e.g.,} redundancy) are highly useful in maintaining reliability despite noise
\cite{Alon16,gamal11,XR17,Von56}. All these, however, require the ability to
transfer significant amount of information over stable communication channels.
Similar redundancy methods may seem biologically plausible in systems
that enjoy stable structures, such as brain tissues.

The impact of noise in stochastic systems with ephemeral connectivity patterns
is far less understood. To study these, we focus on  {\em rumor spreading} - a
fundamental information dissemination task that is a prerequisite to almost
any distributed system \cite{SODA,CHHKM12,DGH88,KSSV00}. A
successful and efficient rumor spreading process is one in which a large
group manages to quickly learn information initially held by one or a few
informed individuals. 
Fast information flow to the whole
group dictates that messages be relayed between individuals. Similar to the
game of Chinese Whispers, this may potentially result in runaway buildup of
noise and loss of any initial information \cite{Cascades}. It currently remains
unclear what are the precise conditions that enable fast rumor spreading. On
the one hand,  recent works indicate that in some models of
random noisy interactions, a collective coordinated process can in fact
achieve fast information spreading \cite{OHK14,NF16}. These models, however,
are based on {\em push} operations that inherently include a certain reliable
component (see more details in Section \ref{sec:separation}). On the other
hand, 
other works consider computation through noisy operations, and
show that several distributed tasks 
require significant running time
\cite{GoyalKS08}. The tasks considered in these works (including the problem of
learning the input bits of all processors, or computing the parity of all the inputs) were motivated by 
computer applications, and may be less relevant for biological contexts. Moreover, they appear to be more demanding 
than basic  tasks, such as rumor spreading, and hence it is unclear how to relate 
bounds on the former problems to the latter ones.

In this paper we take a general stance to identify limitations under which
reliable and fast rumor spreading cannot be achieved. Modeling a well-mixed
population, we consider a passive communication scheme in which information flow occurs  as one agent observes the cues displayed by another. If these interactions are perfectly reliable,  the population could achieve extremely fast rumor spreading \cite{KSSV00}. In contrast, here we focus on
the situation in which messages are noisy.
Informally, our main theoretical result states that when all components of
communication are noisy then fast rumor spreading through large
populations is not feasible. 
In other words, 
our results imply that fast rumor spreading can only be
achieved if either 1) the system exhibits some degree of structural stability or
2) some facet of the pairwise communication is immune to noise.
In fact, our lower bounds hold even when individuals are granted
unlimited computational power and even when the system can take advantage of
complete synchronization. 

Finally, we corroborate our theoretical findings with new analyses regarding the efficiency of information dissemination during recruitment by desert ants. More specifically, we analyze data from an experiment conducted at the Weizmann Institute of Science, concerning 
recruitment  in {\em Cataglyphis niger} desert ants \cite{Razin}. 
These analyses suggest that this biological system lacks reliability in all its communication components, and
 its deficient performances qualitatively validate our predictions. We stress that this part of the paper is highly uncommon. 
Indeed, using empirical biological data to validate predictions from theoretical distributed computing is extremely rare. 
We believe, however, that this interdisciplinary methodology may carry significant potential, and hope that this paper could be useful for future works that will follow this framework.

\subsection{The problem}
\label{sec:problem}
In this section, we give an intuitive description of the problem. Formal definitions are provided in Section \ref{sec:model}.

Consider a population of $n$ {\em agents}. Thought of as computing entities,
assume that each agent  has a discrete internal {\em state},
and can execute randomized algorithms - by internally flipping coins. In addition, each agent has an {\em opinion}, which we assume for simplicity to be binary, {\em i.e.,} either 0 or 1.
A small number, $s$, of agents play the role of {\em sources}. Source agents are aware of their role and share the same opinion, referred to as the {\em correct opinion}. The goal of all agents is to have their opinion coincide with the correct opinion. 

To achieve this goal,  each agent continuously displays one of several {\em
messages} taken from some finite alphabet $\Sigma$. Agents interact  according to a random pattern, termed as the
{\em parallel-$\pull$} model: In each round $t\in \Natural^+$, each agent $u$
observes the message currently displayed by another agent $v$, chosen uniformly at random (u.a.r) from all agents. Importantly, communication is noisy, hence the message observed
by $u$ may differ from that displayed by $v$. The noise is characterized by a
\emph{noise parameter}  $\delta>0$. Our model encapsulates a large family of noise
distributions, making our bounds highly general. Specifically, the noise
distribution  can take {\em any} form, as long as it satisfies the following
criterion.

\begin{definition}[The $\delta$-uniform noise criterion]
    \label{def:noisecrit}
    Any time some agent $u$ observes an agent $v$ holding some message $m \in
    \Sigma$, the probability that $u$ actually receives a message $m'$ is at
    least $\delta$, for any $m' \in \Sigma$. All noisy samples
    are independent.
\end{definition}

When messages are noiseless, it is easy to see that the number of rounds that are required to guarantee that all agents hold the correct opinion with high probability
is $\bigO(\log n)$ \cite{KSSV00}. 
In what follows, we aim to show that when the $\delta$-uniform noise criterion
is satisfied, the number of rounds required until even one non-source agent can
be moderately certain about the value of the correct opinion is very large.
Specifically, thinking of $\delta$ and $\bias$ as constants independent of the
population size $n$, this time is at least $\Omega(n)$. 

To prove the lower bound, we will bestow the agents with  capabilities that far
surpass those that are reasonable for biological entities. These include: 
\begin{itemize}
    \item Unique identities: Agents have unique identities in the range
        $\{1,2,\ldots n\}$. When observing agent $v$, its identity is received
        without noise. 
    \item Complete knowledge of the system: Agents have access to all  
        parameters of the system (including $n$, $s$, and $\delta$) as well as to the full knowledge
        of the initial configuration except, of course, 
        the correct opinion and the identity of the sources.
        In addition, agents have access to  the results of random coin
        flips used internally by all other agents.
    \item Full synchronization: Agents know when the execution starts, and can
        count rounds. 
\end{itemize}
We  show that even given this extra computational power, fast convergence
cannot be achieved. 

\subsection{Our contributions}\label{sec:results}
\subsubsection{Theoretical results}\label{sec:theoretical}
In all the statements that follow we consider the parallel-$\pull$ model satisfying the $\delta$-uniform noise criterion, where  $cs/n< \delta\leq 1/2$ for some sufficiently large constant $c$.
Note that our criterion given in Definition \ref{def:noisecrit} implies that
$\delta\leq 1/|\Sigma|$. 
Hence, the previous lower bound on $\delta$ implies a restriction on the
alphabet size, specifically, $|\Sigma|\leq n/(cs)$.

\begin{theorem}
    \label{thm:ppullk}
    Any rumor spreading protocol cannot converge in less than
    ${\Omega}(\frac{n\delta}{\bias^2(1-2\delta)^2})$ rounds. 
\end{theorem}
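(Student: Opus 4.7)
The plan is to prove the theorem through an information-theoretic indistinguishability argument focused on a single non-source agent $u$. Let $P_b$ denote the law of the entire execution under correct opinion $b \in \{0, 1\}$, coupled so that only the behavior of the $s$ source agents differs between $P_0$ and $P_1$; let $\mathbf{X}_u^{(\leq t)}$ be $u$'s sequence of noisy observations through round $t$. Since $u$'s guess of the correct opinion must be a (possibly randomized) function of $\mathbf{X}_u^{(\leq t)}$, any constant advantage over random guessing requires $d_{\mathrm{TV}}(P_0(\mathbf{X}_u^{(\leq t)}),\, P_1(\mathbf{X}_u^{(\leq t)})) = \Omega(1)$ and hence, by Pinsker's inequality, $\kl(P_0(\mathbf{X}_u^{(\leq t)}) \,\|\, P_1(\mathbf{X}_u^{(\leq t)})) = \Omega(1)$. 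It therefore suffices to show this divergence remains $o(1)$ whenever $t = o(n\delta/(s^2(1-2\delta)^2))$.

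Applying the chain rule, the KL above equals $\sum_{r \leq t}\mathbb{E}_{P_0}[D_r]$ with $D_r = \kl(P_0(X_u^{(r)} \mid \mathbf{X}_u^{(<r)}) \,\|\, P_1(X_u^{(r)} \mid \mathbf{X}_u^{(<r)}))$. In round $r$, $u$ observes a uniformly random $V_r \in [n]$, independent of the past; by joint convexity of KL,
\[
D_r \leq \frac{1}{n}\sum_{v \in [n]} \kl(p_v^0 \,\|\, p_v^1),
\]
where $p_v^b$ is the conditional law of $X_u^{(r)}$ given $V_r = v$ and $\mathbf{X}_u^{(<r)}$ under $P_b$. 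The $\delta$-uniform noise criterion forces $p_v^b(y) \geq \delta$ for every $y \in \Sigma$; combined with the standard inequality $\kl(p\,\|\,q) \leq \chi^2(p, q)$ and the fact that two distributions each bounded below by $\delta$ satisfy $\|p-q\|_2^2 = O((1-2\delta)^2)$, this yields the crude per-observation bound $\kl(p_v^0 \,\|\, p_v^1) = O((1-2\delta)^2/\delta)$, attained when $v$ is a source.

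The main obstacle is refining this into the sharper estimate $\mathbb{E}_{P_0}[D_r] = O(s^2(1-2\delta)^2/(n\delta))$. The $s/n$ factor is immediate for direct source observations, but the messages displayed by the $n - s$ non-source agents still depend on $B$ indirectly through their own past observations and must also be controlled. I plan to handle this via an inductive potential argument: define $\phi_v(r)$ as the $\chi^2$-divergence between the laws of $v$'s internal state at time $r$ under $P_0$ versus $P_1$, and show by induction on $r$ that $\phi_v(r) = O(rs(1-2\delta)^2/(n\delta))$. The inductive step combines (i) the noise-induced contraction of divergence at each observation---of order $(1-2\delta)^2$ under the $\delta$-uniform criterion---with (ii) the fact that, per round, only an $s/n$ fraction of observations directly reach a source, so any increment to $\phi_v$ either comes from such a direct hit or is pre-attenuated through the noisy channel from another non-source agent. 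Plugging this inductive bound into the convexity inequality on $D_r$ gives the refined per-round estimate, and summing over $r \leq t$ concludes $t = \Omega(n\delta/(s^2(1-2\delta)^2))$. The most delicate point will be making the induction rigorous while accounting for the correlations induced among different agents' states by their shared history of mutual observations, so that the divergence of each $\phi_v$ really is driven by the $s/n$-rate exposure to source information and not by complex feedback loops within the population.
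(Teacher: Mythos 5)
Your outer skeleton (Neyman--Pearson plus Pinsker, chain rule for $\kl$, and a per-observation $\ell_1$ bound of order $s(1-2\delta)/n$ turned into a per-observation $\kl$ bound of order $s^2(1-2\delta)^2/(n^2\delta)$) is the same as the paper's. The route, however, is genuinely different, and the difference is where the gap lies. The paper does \emph{not} analyze the parallel model directly: it first simulates the protocol in the \rpull model (Lemma \ref{lem:ppullsimul}), precisely so that the observation history $x^{(<t)}$ is \emph{common to all agents}. Conditioned on that common history and on the shared randomness $r$, every non-source agent's displayed message is a deterministic function that does not depend on $\eta$, so the two conditional laws differ only through the probability-$s/n$ event of sampling a source (Lemma \ref{lem:subclaim1}). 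In your direct approach you condition only on $u$'s own history $\mathbf{X}_u^{(<r)}$, which leaves the states of the other $n-s$ agents as hidden variables whose laws genuinely depend on $\eta$; you correctly identify this as the main obstacle, but the proposed fix does not work.

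Concretely, the inductive potential argument has two holes. First, the quantity you need in the chain-rule decomposition is the divergence between the conditional laws of $v$'s displayed message \emph{given $\mathbf{X}_u^{(<r)}$}, not between the marginal laws of $v$'s state; conditioning on $u$'s history (which includes past observations of $v$ and of sources) can substantially inflate this divergence, and a bound on $\phi_v(r)$ does not transfer. Second, even at the level of marginals the recursion does not close linearly: in each round $v$'s increment is not only the $\frac{s}{n}\cdot O\bigl((1-2\delta)^2/\delta\bigr)$ direct-source term but also a term of order $\kappa\cdot\frac{1}{n}\sum_w\phi_w(r)$, where $\kappa$ is the contraction coefficient of the $\delta$-uniform channel. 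For constant $\delta<1/2$, $\kappa$ is a constant bounded away from $0$, so the population-average potential satisfies $\bar\phi(r+1)\leq(1+\kappa)\bar\phi(r)+O\bigl(s(1-2\delta)^2/(n\delta)\bigr)$, which grows geometrically and exceeds any constant after $O(\log n)$ rounds; the claimed linear bound $\phi_v(r)=O\bigl(rs(1-2\delta)^2/(n\delta)\bigr)$ would require the second-hand term to be negligible, which is exactly the unproven ``no feedback amplification'' statement. (A further technical nuisance is that $\chi^2$-divergence has no chain rule, so even formulating the induction cleanly is awkward.) The missing idea is the paper's broadcast reduction, which eliminates the hidden-state problem altogether by making the entire history public, at the cost of the $n$-fold time dilation that is then cancelled when converting back to parallel rounds.
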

Recall that  a source is aware that it is a source, but if it wishes to identify itself as such to agents that observe it, it must encode this information in a
message, which is, in turn, subject to noise. We also consider the case in
which
an agent can reliably identify a source when it observes one (i.e.,, this
information is not noisy). For this case, the following  bound, which is
weaker than the previous one but still polynomial, applies (a formal proof appears in Section \ref{app:cor_source}): 

\begin{corollary}
    \label{cor:informal_reliable_source}
    Assume that sources are reliably detectable. 
    There is no rumor spreading protocol that
    converges in less than
    ${\Omega}((\frac{n\delta}{s^2(1-2\delta)^2})^{1/3})$ rounds.   
\end{corollary}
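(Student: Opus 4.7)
The plan is to adapt the information-theoretic argument underlying Theorem \ref{thm:ppullk} to the setting of reliably detectable sources, and to isolate the single step where reliable detection strengthens the information available to a non-source agent.

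First, I would reuse the framework of the proof of Theorem \ref{thm:ppullk}, which tracks the evolution of an information-theoretic distance (for instance, a squared Hellinger or $\kl$-type distance) between the joint distribution of a fixed non-source agent $u$'s observations up to time $t$ when the correct opinion is $0$ versus when it is $1$. Call this quantity $\Phi_t$. Convergence of a protocol in $T$ rounds implies, via Le Cam's two-point method (or Fano's inequality), that $\Phi_T$ is bounded below by a positive constant.

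Next, I would derive a recurrence for $\Phi_t$ that explicitly conditions on whether, in round $t$, the agent $u$ samples a source or not. Under reliable detection, the event ``$u$ samples a source'' is known to $u$ without error, so there is no longer a dilution of the source signal by the background non-source messages: the direct contribution of such a round to $\Phi_{t+1}-\Phi_t$ becomes of order $(s/n)(1-2\delta)^2$, rather than the $(s/n)^2(1-2\delta)^2$-type bound that appears in Theorem \ref{thm:ppullk}. For the complementary event, in which $u$ samples a non-source, the analysis of Theorem \ref{thm:ppullk} goes through essentially unchanged, and contributes a term controlled by the current value of $\Phi_t$ through the strong data-processing inequality for the $\delta$-uniform noisy channel, together with the mixing term of order $(s/n)^2(1-2\delta)^2/\delta$ coming from the indistinguishability of source and non-source messages at the sampled non-source.

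Solving the resulting recurrence is the last step. The added linear-in-time term of order $(s/n)(1-2\delta)^2$ changes the growth of $\Phi_t$ relative to Theorem \ref{thm:ppullk}, and balancing it against both the propagation term and the mixing term leads to a cubic inequality in $T$ whose positive root is $\Omega((n\delta/(s^2(1-2\delta)^2))^{1/3})$, matching the statement. The main obstacle will be the bookkeeping in this recurrence, and in particular verifying that propagation through a non-source observation still uses the ``diluted'' source contribution of order $(s/n)^2$, rather than the stronger $(s/n)$ bound that reliable detection enables \emph{only at the observer}. Concretely, one must argue that even though an intermediate non-source $v$ may have reliably identified a source in some past round, the message $v$ currently displays remains statistically close, across the two opinion scenarios, to a message generated without any source knowledge, up to the same contraction factor used in the proof of Theorem \ref{thm:ppullk}. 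Keeping the direct and indirect information channels cleanly separated is where the cube-root scaling ultimately emerges.
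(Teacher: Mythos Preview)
Your plan has a genuine gap at exactly the point you flag as the ``main obstacle.'' You propose to argue that an intermediate non-source $v$ who has reliably met a source still displays, in round $t$, a message whose law differs across the two opinion scenarios only by the same diluted $(s/n)^2$-type factor used in Theorem~\ref{thm:ppullk}. That claim is false. Once $v$ has reliably identified a source, $v$ has received a single noisy sample of the correct opinion; conditioned on that event, $v$'s displayed message can depend on this sample and hence differ across the two scenarios by an amount of order $(1-2\delta)$, not $(s/n)(1-2\delta)$. The crucial step in Lemma~\ref{lem:subclaim1} --- that a non-source agent's behaviour given the observation history is \emph{identical} under $\eta=0$ and $\eta=1$ --- simply fails for such $v$. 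So the ``indirect channel'' you want to keep at the $(s/n)^2$ level is not controlled by any contraction or data-processing argument, and your recurrence does not close.

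The paper's proof (Appendix~\ref{app:cor_source}) avoids this difficulty by a reduction rather than a refined recurrence. It defines $S(t)$ to be the original sources together with all agents that have directly observed a source by round~$t$, and adversarially grants every agent in $S(t)$ full knowledge of the correct opinion, i.e., treats them all as sources. A Chernoff bound shows $|S(T)| = O(sT)$ with high probability. One then applies Theorem~\ref{thm:main} verbatim with $s' = \Theta(sT)$ sources, obtaining a lower bound of order $n\delta/\big((sT)^2(1-2\delta)^2\big)$ rounds; requiring this to be at least $cT$ and solving self-consistently yields $T = \Omega\big((n\delta/(s^2(1-2\delta)^2))^{1/3}\big)$. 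The cube root thus arises from the self-referential equation $T \asymp n\delta/(s^2T^2(1-2\delta)^2)$, not from balancing three terms in a recurrence. If you want to rescue your direct approach, the honest fix is to track the (random, growing) set of agents that have met a source and bound its size --- which is exactly the paper's argument in disguise.
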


Our results suggest that, in contrast to systems that enjoy stable connectivity, structureless systems are highly sensitive to
communication noise. More concretely, the two crucial assumptions that make our lower bounds work are: 
1) stochastic interactions, and 
2)  $\delta$-uniform noise (see the right column of Figure \ref{fig:non-uniform}).
When agents can stabilize their interactions  the first assumption is violated.
In such cases, agents can overcome noise by employing simple error-correction
techniques, {\em e.g.,} using redundant messaging  or waiting for
acknowledgment before proceeding.  As demonstrated in Figure
\ref{fig:non-uniform} (left column), when the noise is not uniform, it might be possible to overcome it with simple techniques 
based on  using default neutral messages, and employing exceptional
distinguishable signals only when necessary.

\begin{figure}[h!]
\begin{center}
\includegraphics[width=10cm]{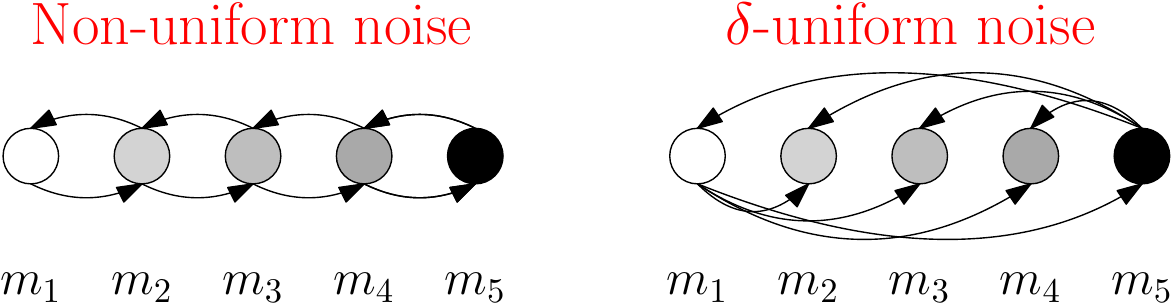}
    \caption{\small {\bf Non-uniform noise vs. uniform noise.} On the left, we
    consider an example with non-uniform noise. Assume that the message
    vocabulary consists of 5 symbols, that is,
    $\Sigma=\{m_1,m_2,m_3,m_4,m_5\}$, where $m_1=0$ and $m_5=1$, represent the
    opinions. Assume that noise can occur only between consecutive messages.
    For example, $m_2$ can be observed as either $m_2$, $m_3$ or  $m_1$, all
    with positive constant probability, but can never be viewed as $m_4$ or
    $m_5$. 
    In this scenario, the population can quickly converge on the correct
    opinion by executing the following. The sources always display the
    correct opinion, {\em i.e.,} either $m_1$ or $m_5$, and each other agent displays
    $m_3$ unless it has seen either $m_1$ or $m_5$ in which case it  adopts
    the opinion it saw and displays it. In other words, $m_3$ serves as a
    default message for non-source agents, and $m_1$ and $m_5$ serve as
    attracting sinks. 
    It is easy to see that the correct opinion will propagate quickly through
    the system without disturbance, and within $\bigO(\log n)$ number of
    rounds, where $n$ is the size of the population, all agents will hold it
    with high probability. In contrast, as depicted on the right picture, if every
    message  can be observed as any other message with some constant positive
    probability  (for clarity, some of the arrows have been omitted from the sketch), then convergence cannot
    be achieved in less than $\Omega(n)$ rounds, as Theorem \ref{thm:ppullk} dictates.}
    \label{fig:non-uniform}
\end{center}
\end{figure}

\subsubsection{Exponential separation between \push and \pull}
\label{sec:separation}

Our lower bounds on the parallel-\pull model (where agents observe other
agents) should be contrasted with known results in the parallel-\push model, 
which is the push equivalent to parallel-\pull model, where in each round each
agent may or may not actively push a message to another agent chosen u.a.r. (see also
Section \ref{sec:random-inter}). Although never proved,
and although their combination is known to achieve more power than each of them
separately \cite{KSSV00}, researchers often view the  parallel-\pull and parallel-\push models as
very similar on complete communication topologies. Our lower bound result, however, undermines this
belief, proving that in the context of noisy communication, there is an
exponential separation between the two models. Indeed, when the noise level is
constant for instance,  convergence (and in fact, a much stronger convergence
than we consider here) can be achieved in the  parallel-\push using only
logarithmic number of rounds \cite{OHK14,NF16}, 
by a simple strategy composed of two stages. 
The first stage consists of providing all agents with a guess about the source's
opinion, in such a way that ensures a non-negligible bias toward the
correct guess. 
The second stage then boosts this bias by progressively amplifying it.  
A crucial aspect in the first stage is that agents remain silent until a certain
point in time that they start sending messages continuously, which happens
after being contacted for the first time.  
This prevents agents from starting to spread information before they have
sufficiently reliable knowledge.  
It further allows to control the dynamics of the information spread in a balanced
manner. More specifically, marking an edge corresponding to a message received
for the first time by a node, the set of marked edges forms a spanning tree of
low depth, rooted at the source.
The depth of such tree can be interpreted as the deterioration of the message's
reliability. 
%

On the other hand, as shown here,  in the parallel-\pull model, even with the
synchonization assumption,  rumor spreading cannot be achieved in
less than a linear number of rounds. 
Perhaps the main reason why these two models are often considered similar is
that with an extra bit in the message, a \push protocol can be
\emph{approximated} in the \pull model, by letting this bit indicate whether
the agent in the \push model was aiming to push its message. However, for such
a strategy to work, this extra bit has to be reliable. Yet, in the noisy \pull model, no bit is safe from noise, and hence, as we show, such
an approximation cannot work. In this sense, the extra power that the noisy \push
model gains over the noisy \pull model, is that the very fact that one node
attempts to communicate with another is reliable. 
This, seemingly
minor, difference carries significant consequences.

\subsubsection{Generalizations}
Several of the assumptions discussed earlier for the parallel-\pull model were
made for the sake of simplicity of presentation. In fact, our results can be
shown to hold under more general conditions, that include: 1) different rate
for sampling a source, and 
2) a more relaxed noise criterion. 
In addition, our theorems were stated with respect to the parallel-\pull model. In this model, at every round, each agent samples a single agent u.a.r. In fact, for any integer $k$, our analysis can be applied to the model in which, at every round, each agent observes $k$ agents chosen u.a.r. In this case, the lower bound would simply reduce by a factor of $k$. Our analysis can also apply to a sequential variant, in which in each time step, two agents $u$ and $v$ are chosen u.a.r from the population and $u$ observes $v$. In this case, our lower bounds would multiply by a factor of $n$, yielding, for example, a lower bound of $\Omega(n^2)$ in the case where $\delta$ and $s$ are constants\footnote{This increase in not surprising as each round in the parallel-\pull model consists of $n$ observations, while the sequential model consists of only one observation in each time step.}.
\subsubsection{Recruitment in desert ants}

Our theoretical results assert that efficient rumor spreading  in large groups could not be achieved without some degree of communication reliability. 
An example of a biological system whose communication reliability appears to be
deficient in all of its components is recruitment in {\em Cataglyphis niger}
desert ants. In this species, when a forager locates an oversized food
item, she returns to the nest to recruit other
ants to help in its retrieval \cite{Amor,Razin}. 

We complement our theoretical findings by providing new analyses from an
experiment on this system conducted at the Weizmann Institute of Science \cite{Razin}. 
In such experimental setting, we interpret our theoretical findings as an
abstraction of the interaction modes between ants. While such high-level
approximation may be considered very crude, we retain that it constitutes a
good trade-off between analytical tractability and experimental data.

In our experimental setup recruitment happens in the
small area of the nest's entrance chamber (Figure \ref{fig:razin}a). We find that within this confined
area, the interactions between ants are nearly uniform \cite{Ant-estimation}, 
such that an ant cannot control which of her nest mates she meets next (see Figure \ref{fig:razin}b).  
This random meeting pattern coincides with the first main assumption of our model. Additionally, it has been shown that 
 recruitment in \emph{Cataglyphis niger} ants relies on rudimentary alerting interactions
\cite{ANNA,Bert} which are subject to high levels of noise \cite{Razin}. 
Furthermore, the responses to a
recruiting ant and to an ant that is randomly moving in the nest are extremely
similar \cite{Razin}. 
Although this may resemble a noisy push interaction scheme, ants cannot
reliably distinguish an ant that attempts to transmit information from any
other non-communicating individual. In our theoretical framework, the latter
fact means that the structure of communication is captured by a noisy-pull
scheme (see more details about $\push$ vs.~$\pull$ in Section
\ref{sec:separation}).

\begin{figure}[h!]
\begin{center}
    \includegraphics[width=8cm]{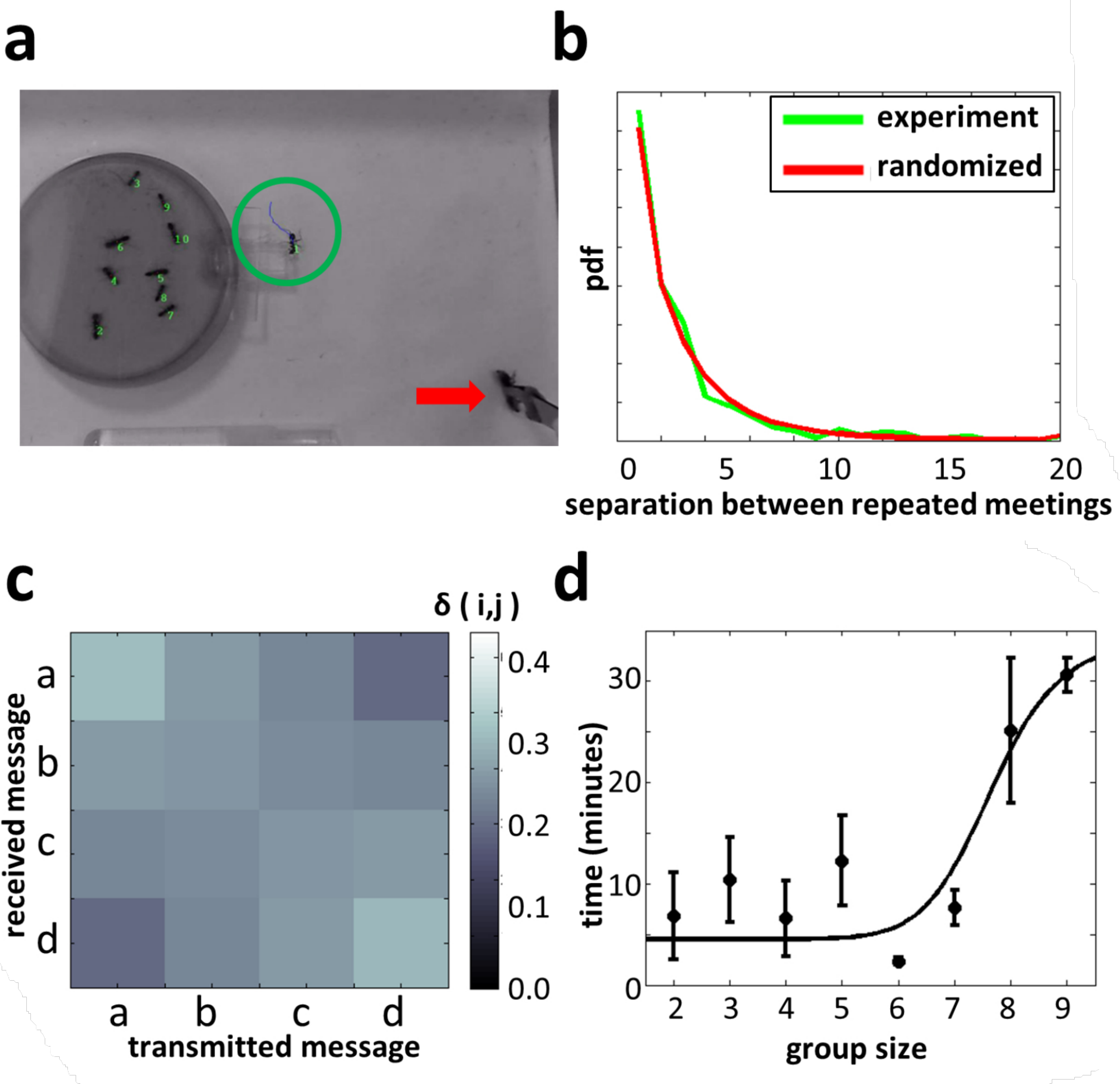}
    \caption{\small {\bf Unreliable communication and slow recruitment by desert ant
        (\emph{Cataglyphis niger}).} \textbf{a.} The experimental setup. The
        recruiter ant (circled) returns to the nest's entrance chamber (dark,
        9cm diameter, disc) after finding the immobilized food item (arrow).
        Group size is ten. \textbf{b.} A \textit{pdf} of the number of
        interactions that
        an ant experiences before meeting the same ant twice. The \textit{pdf} is
        compared to uniform randomized interaction pattern. Data summarizes $N=671$
        interactions from seven experiments with a group size of 6 ants.
        \textbf{c.} 
        Interactions with moving ants where classified into four different
        messages ('a' to 'd') depending on the ants' speed. The noise at which
        messages were confused with each other was estimated according to the
        response recipient, initially stationary, ants (see Materials and Methods). Gray
        scale indicates the estimated overlap between every two messages
        $\delta(i,j)$. Note that $\delta=\min(\delta(i,j)) \approx 0.2 $. Data
        collected over $N=64$ interactions. \textbf{d.} The mean time it takes an ant that is informed about the food to recruit two nest-mates to exit the nest is presented for two group size ranges.}
    \label{fig:razin}
\end{center}
\end{figure}

 It has
previously been shown that the information an ant passes in an interaction can
be attributed solely to her speed before the interaction \cite{Razin}. Binning
ant speeds into four arbitrary discrete messages and measuring the responses of
stationary ants to these messages, we can estimate the probabilities of one
message to be mistakenly perceived as another one (see Materials and Methods).
Indeed, we find that this communication is extremely noisy and complies with
the uniform-noise assumption with a $\delta$ of approximately $0.2$ (Figure
\ref{fig:razin}c).

Given the coincidence between the communication patterns in this ant system and
the requirements of our lower bound we expect long delays before any uninformed
ant can be relatively certain that a recruitment process is occurring. We
therefore measured the time it takes an ant, that has been at the food source,
to recruit the help of two nest-mates. We find that this time increases with
group size ($p<0.05$ Kolmogorov-Smirnov test over $N=24$ experiments, Figure \ref{fig:razin}d). Thus, in this system, inherently noisy interactions on
the microscopic level have direct implications on group level performance. While group sizes in these experiments are small, we nevertheless find these recruitment times in accordance with our asymptotic theoretical results.
More details on the experimental methodology can be found in Appendix \ref{app:exp}. 

\subsection{Related work}

Lower bound approaches in biological contexts are still extremely rare
\cite{bialek,feinerman2013theoretical}.
Our approach can be framed within the general endeavour of addressing problems
in theoretical biology through the algorithmic perspective of theoretical
computer science \cite{chazelle_natural_2009, chastain_algorithms_2014}.

The computational study of abstract systems composed of simple individuals that
interact using highly restricted and stochastic interactions has recently been
gaining considerable attention in the community of theoretical computer
science. Popular models include {\em population protocols}
\cite{Pop2}, which typically consider constant  size
individuals that interact in pairs (using constant size messages) in random
communication patterns, and the {\em beeping} model \cite{beeping1}, which
assumes a fixed network with extremely restricted communication. Our model
also falls in this framework as we consider the $\pull$ model \cite{DGH88,
KSSV00, KDG03} with constant size messages. So far, despite interesting works
that consider different fault-tolerant contexts \cite{Aspnes,Pop-ss},
most of the progress in this framework considered noiseless scenarios. 

In {\em Rumor Spreading} problems  (also referred to as {\em Broadcast}) a
piece of information typically held by a single designated agent is to be
disseminated to the rest of the population.
It is the subject of a vast literature in theoretical computer science, and
more specifically in the distributed computing community, see, {\em e.g.,}
\cite{SODA,CHHKM12,DGH88, DGM11,  OHK14,  GoyalKS08, KSSV00, Pittel}. 
While some works assume a fixed topology, the canonical setting does not assume
a network. Instead agents communicate through uniform $\push / \pull$ based
interactions (including the {\em phone call} model), in which agents interact in pairs with other agents independently chosen at each time step uniformly at random from all agents in the population. The success of such
protocols is largely due to their inherent simplicity and fault-tolerant
resilience \cite{ES09,KSSV00}.
In particular, it has been shown that under the $\push$ model, there exist
efficient rumor spreading protocol that uses a single bit per message and can
overcome flips in messages (noise) \cite{OHK14}. 

The line of research initiated by El-Gamal \cite{gamalOriginal}, also studies a
broadcast problem with noisy interactions. The regime however is rather
different from ours: all $n$ agents hold a bit they wish to transmit to a
single receiver. 
This line of research culminated in the $\Omega(n \log \log n)$  lower bound on
the number of messages shown in \cite{GoyalKS08}, matching the upper bound
shown many years sooner in \cite{Gal}.

\section{Formal description of the models}
\label{sec:model}

We consider a population of $n$ agents that interact stochastically and aim to converge on a particular opinion held by few knowledgable individuals. For simplicity, 
we assume that the set of opinions contain two opinions only, namely,  0 and 1.

As detailed in this section, we shall assume that agents have access to significant amount of resources, often exceeding reasonable more realistic assumptions.
Since we are concerned with lower bounds, we do not loose generality from such permissive assumptions.  
These liberal assumptions  will actually simplify our proofs. One of these assumptions is the assumption that each agent is equipped with a unique identity $id(v)$ in the range $\{1,2,\ldots,n\}$ (see more details in Section \ref{sec:liberal}). 

\subsection{Initial configuration}
The initial configuration is described  in several layers. First, the {\em neutral initial configuration} corresponds to the initial states of the agents, before the sources and the desired opinion to converge to are set. Then,  
 a random initialization is applied to the given neutral initial configuration, which determines the set of sources and the opinion that agents need to converge to. This will result in what we call the {\em charged initial configuration}. It can represent, for example, an external event that was identified by few agents which now need to deliver their knowledge to the rest of the population.
 
\textbf{Neutral Initial Configuration $\bx^{(0)}$.} Each agent $v$ starts the
execution with an \emph{input} that contains, in addition to its identity,
    an initial {\em state} taken from some discrete set of states,  and\footnote{The opinion of an agent could have been considered as part of the state of the agent. We separate these two notions merely for the presentation purposes.}
    a binary {\em opinion} variable $\lambda_v\in \{0,1\}$.
The {\em neutral initial configuration} $\bx^{(0)}$ is the vector whose $i$'th index,
$\bx^{(0)}_i$ for $i\in [n]$, is the input of the agent with
identity $i$. 

\textbf{Charged Initial Configuration and  Correct Opinion.} 
The charged initial configuration is determined in three stages. The first
corresponds to the random selection of sources,  the second to the selection of
the correct opinion, and the third to a possible update of states of sources,
as a result of being selected as sources with a particular opinion. 

\begin{itemize}[noitemsep,topsep=0pt]

    \item \textbf{1st stage - Random selection of sources.}
        Given an integer $s\leq n$, a set $S$ of size $s$ is chosen uniformly at random
        (u.a.r) among the agents. The agents in $S$ are called {\em sources}.  Note
        that any agent has equal probability of being a  source. We assume that each
        source knows it is a source, and conversely, each non-source  knows it is not a
        source. 

    \item \textbf{2nd stage - Random selection of correct opinion.}
        In the main model we consider, after sources have been determined in the first
        stage, the sources are randomly initialized with an opinion, called the {\em
        correct opinion}. That is, a fair coin is flipped to determine an opinion in
        $\{0,1\}$ and all sources are assigned with this opinion. 

    \item \textbf{3rd stage - Update of initial states of sources.}
        To capture a change in behavior as a result of being selected as a source with
        a particular opinion, we assume that once the opinion of a source $u$ has been
        determined, the initial state of $u$ may change according to some distribution
        $f_{source-state}$ that depends on (1) its identity,  (2) its opinion, and (3)
        the neutral configuration. Each source samples its new state independently. 
\end{itemize}

\subsection{Alphabet and noisy messages}
Agents communicate by observing each other according to some random pattern (for details see Section \ref{sec:random-inter}). To improve communication agents may choose which content, called {\em message}, they wish to reveal to other agents that observe them. Importantly, however, such messages are subject to noise. 
More specifically, at any given time, each agent $v$ (including
sources) displays a message $m\in \Sigma$, where $\Sigma$ is some finite
alphabet.  The alphabet $\Sigma$ agents use to communicate may be
richer than the actual information content they seek to disseminate, namely, their opinions. This, for
instance, gives them the possibility to express several levels of certainty
\cite{PLOS}. We can safely assume that the size of $\Sigma$ is at least two, and that 
$\Sigma$ includes both  symbols $0$ and $1$.  We are mostly concerned with the case where $\Sigma$ is of constant size ({\em i.e.,} independent of the number of agents), but note that our results hold 
for any size of the alphabet $\Sigma$, as long as the noise criterion is satisfied (see below).

\textbf{$\delta$-uniform noise.}
When an agent $u$ {\em observes} some agent $v$, it  receives a
sample of the message currently held by $v$. 
More precisely, for any $m, m' \in \Sigma$, let $P_{m, m'}$ be the probability
that, any time some agent $u$ observes an agent $v$ holding some message $m \in
\Sigma$, $u$ actually receives message $m'$. 
The probabilities $P_{m, m'}$ define the entries of the noise-matrix $P$
\cite{NF16}, which does not depend on time. 
We hereby also emphasize that the agents' samples are independent.

The noise in the sample is
characterized by a \emph{noise parameter} $0<\delta\leq 1/2$.  One of the important aspects in our theorems is that they are general enough to hold assuming {\em any} distribution governing the noise, as long as it satisfies the following noise criterion. 
\begin{definition}[The noise ellipticity parameter $\delta$]
    We say that the noise has
    ellipticity $\delta$ if $P_{m, m'} \geq \delta$ for any $m, m' \in \Sigma$. 
\end{definition}
Observe that the aforementioned criterion implies that $\delta\leq 1/|\Sigma|$, and that the case $\delta = 1/|\Sigma|$ corresponds to messages being completely random, and the rumor spreading
problem is thus unsolvable. 
We next define a weaker criterion,  that is particularly meaningful in cases in which sources are more restricted in their message repertoire than general agents. 
This may be the case, for example, if sources always choose to display their opinion as their message (possibly together with some extra symbol indicating that they are sources). Formally, we define $\Sigma'\subseteq \Sigma$ as the set  of possible messages that a source can hold together with the set of messages that can be observed when viewing a source ({\em i.e.,} after noise is applied). Our theorems actually apply to the following criterion, that requires that only messages in $\Sigma'$ are attained due to noise with some sufficient probability.
\begin{definition}[The relaxed noise ellipticity parameter $\delta$]
    We say that the noise has $\Sigma'$-relaxed ellipticity $\delta$ if $P_{m, m'} \geq \delta$
    for any $m\in \Sigma$ and $ m' \in \Sigma'$. 
\end{definition}


\subsection{Random interaction patterns}
\label{sec:random-inter} 

We consider  several basic interaction patterns. 
Our main model is the \emph{parallel-\pull} model. In this model,  time
is divided into \emph{rounds}, where at each round $i\in \Natural^+$, each agent $u$ independently
selects an agent $v$ (possibly $u=v$) u.a.r from the population and then $u$ observes the message held by $v$.
The \emph{parallel-\pull} model should be contrasted with the
\emph{parallel-\push} model, in which $u$
can choose between \emph{sending} a message to the selected node $v$ or doing nothing. 
We shall also consider the following variants of \emph{\pull} model.

\begin{itemize}[noitemsep,topsep=0pt]
    \item \ppullk. Generalizing \ppull for an integer $1\leq k\leq n$, the \ppullk model allows agents to observe $k$ other agents in each round. That is, at each round $i\in \Natural^+$, each agent independently selects a set of $k$ agents (possibly including itself) u.a.r from the population and observes each of them. 
    \item  \spull. In each time step $t\in \Natural^+$,
         two agents $u$ and $v$ are selected uniformly at random (u.a.r)
        among the population, and agent $u$ observes $v$. 
    \item \rpull. It each time step $t\in \Natural^+$ one
        agent is chosen u.a.r. from the population and all agents observe it, receiving the same
        noisy sample of its message\footnote{The \rpull model is mainly used for technical considerations. We use it in our
proofs as it simplifies our arguments while not harming their generality.
Nevertheless, this broadcast model can also capture some situations in which
agents can be seen simultaneously by many other agents, where the fact that all
agents observe the same sample can be viewed as noise being originated by the
observed agent.}.
\end{itemize}
Regarding the difference in time units between the models,
since interactions occur in parallel in the \ppull model, one round in that
model should  informally be thought of as roughly $n$ time steps in the \spull or \rpull
model.

\subsection{Liberal assumptions}
\label{sec:liberal}
As mentioned, we shall assume that agents have abilities that surpass their
realistic ones. These assumption not only increases the generality of our
lower bounds, but also simplifies their proofs. Specifically, the following
liberal assumptions are considered.

\begin{itemize}[noitemsep,topsep=0pt]
\item {\bf Unique identities.}
    Each agent is equipped with a unique identity $id(v)\in\{1,2,\ldots,n\}$, that
    is, for every two agents $u$ and $v$, we have $id(u)\neq id(v)$. Moreover,
    whenever an agent $u$ observes some agent $v$, we assume that $u$ can
    infer the identity of $v$. In other words, we provide agents with the ability
    to reliably distinguish between different agents at no cost. 
\item {\bf Unlimited internal computational power.} 
    We allow agents to have unlimited
    computational abilities including infinite memory capacity. Therefore, agents
    can potentially perform arbitrarily complex computations based on their
    knowledge (and their $id$).
\item {\bf Complete knowledge of the system.} 
    Informally, we assume that agents have access to the complete description
    of the system except for who are the sources and what is their opinion.
    More formally, we assume that each agent has access to:
    \begin{itemize}[noitemsep,topsep=0pt]
    \item the  neutral initial configuration $\bx^{(0)}$,
    \item all the systems parameters, including the number of agents $n$, the
        noise parameter $\delta$, the number of sources $s$, and the
        distribution $f_{source-state}$ governing the update the states of
        sources in the third stage of the charged initial configuration. 
    \end{itemize}
\item {\bf Full synchronization.}  
    We assume that all agents are equipped with clocks that can count time
    steps (in \spull or \rpull) or rounds (in \ppullk). The clocks are
    synchronized, ticking at the same pace, and initialized to 0 at the
    beginning of the execution. This means, in particular, that if they wish,
    the agents can actually share a notion of time that is incremented at each
    time step. 
\item {\bf Shared randomness.}  
    We assume that algorithms can be randomized. That is, to determine the next
    action, agents can internally toss coins and base their decision on the
    outcome of these coin tosses. Being liberal, we shall assume that
    randomness is shared in the following sense. At the outset, an arbitrarily
    long  sequence $r$ of random bits is generated and the very same sequence
    $r$ is written in each agent's memory before the protocol execution starts.
    Each agent can then deterministically choose (depending on its  state)
    which random bits in $r$ to use as the outcome of its own random bits. This
    implies that, for example, two agents can possibly make use of the very
    same random bits or merely observe the outcome of the random bits used by
    the other agents. Note that the above implies that, conditioning on an
    agent $u$ being a non-source agent, all the random bits used by $u$ during
    the execution are accessible to all other agents.  
 \item {\bf Coordinated sources.} 
    Even though non-source agents do not know who the sources are, we assume
    that sources do know who are the other sources. This means, in particular,
    that the sources can coordinate their actions. 
\end{itemize}

 \subsection{Considered algorithms and solution concept}
Upon observation, each agent can alter its internal state (and in particular,
its message to be seen by others) as well as its opinion.
The strategy in which agents update these variables is called
``algorithm''.  As mentioned, algorithms can be randomized, that is, to
determine the next action, agents can use the outcome of coin tosses in the
sequence $r$ (see \emph{Shared randomness} in Section \ref{sec:liberal}). Overall,
the action of an agent $u$ at time $t$ depends on:
 \begin{enumerate}[noitemsep,topsep=0pt]
    \item the initial state of $u$ in the charged initial configuration  (including the identity of $u$ and whether or not it is a source),
        \item the initial knowledge of $u$ (including the system's parameters and neutral configuration),

        \item the time step $t$, and the list of its observations (history) up to time $t-1$, denoted $x_u^{(<t)}$,
    \item the sequence of random bits $r$.
\end{enumerate}

\subsection{Convergence and time complexity}
At any time, the opinion of an agent can be viewed as a binary \emph{guess}
function that  is used to express its most knowledgeable guess of the correct opinion. 
The agents aim to minimize the probability that they fail to guess this opinion.
In this context, it can be shown that the optimal guessing function is
deterministic. 
\begin{definition}
    \label{def:convergence}
    We say that \emph{convergence} has been achieved if one can specify a
    particular non-source agent $v$, for which is it guaranteed that its
    opinion is the correct opinion with probability at least $2/3$. The \emph{time
    complexity} is the number of time steps (respectively, rounds) required to
    achieve convergence. 
\end{definition}

We remark that the latter definition encompasses all three models  
considered.

\begin{remark}[Different sampling rates of sources]
    We consider sources as agents in the population but remark that they can also
    be thought of as representing the environment. In this case, one may consider a
    different rate for sampling a source (environment) vs. sampling a typical
    agent. For example, the probability to observe any given source (or
    environment) may be $x$ times more than the probability to observe any given
    non-source agent. This scenario can also be captured by a slight adaptation
    of our analysis. When $x$ is an integer, we can alternatively obtain such
    a generalization by considering
    additional \emph{artificial} sources in the system. Specifically, we replace
    each source $u_i$ with a set of sources $U_i$ consisting of $x$ sources that
    coordinate their actions and behave identically,
    simulating the original behavior of $u_i$. 
    (Recall that we assume
    that sources know who are the other sources and can coordinate their
    actions.)
    Since the number of sources
    increases by a multiplicative factor of $x$, our lower bounds (see Theorem
    \ref{thm:main} and Corollary \ref{cor:informal_reliable_source}) decrease by a
    multiplicative factor of $x^2$. 
\end{remark}

\section{The lower bounds}
\label{sec:fixed_coin} 
Throughout this section we consider  $\delta<1/2$, such that $\frac{(1-2\delta)}{\delta sn} \leq \frac
    1{10}$. Our goal in this section is to prove the following result.
\begin{theorem}
    \label{thm:main}
    Assume that the relaxed $\delta$-uniform noise criterion is satisfied.
        \begin{itemize}
        \item Let $k$ be an integer. Any rumor spreading protocol on the \ppullk
            model cannot converge in fewer rounds  than $
                \Omega\left(\frac{n\delta}{k\bias^2(1-2\delta)^2}\right).$
        \item Consider either the  $\spull$ or the \rpull model. Any rumor
            spreading protocol cannot converges in fewer rounds  than $
                \Omega\left(\frac{n^2\delta}{\bias^2(1-2\delta)^2}\right).$
    \end{itemize}
\end{theorem}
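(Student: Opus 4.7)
I prove the bound for the \rpull model first; the \spull lower bound follows by a time-rescaling argument (each \spull step processes the information of a single pairwise observation, whereas each \rpull round broadcasts a single message to the whole population, a factor of $n$ more observations per unit of time), and the \ppullk bound follows by observing that $k$ independent observations per agent per round accelerate convergence by at most a factor of $k$. The core technique is an information-theoretic potential based on KL divergence.

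\emph{Setup.} Let $B\in\{0,1\}$ be the correct opinion (a fair coin), and fix a non-source agent $v$ on which convergence is claimed. Let $P_b$ denote the law, conditional on $B=b$, of $v$'s full observable record: its noisy samples $\seqobs$, the neutral initial configuration $\bx^{(0)}$, and the shared random bits. Because $v$'s guessed opinion is a deterministic function of this record, convergence with probability $\geq 2/3$ forces $d_{TV}(P_0,P_1)\geq 1/3$, and hence by Pinsker's inequality $\kl(P_0\,\|\,P_1)=\Omega(1)$. By the KL chain rule over the $T$ rounds,
\[
\kl(P_0\,\|\,P_1)=\sum_{t=1}^T \mathbb{E}\bigl[\kl\bigl(P_0^{(t)}(\cdot\mid\bx^{(<t)})\,\|\,P_1^{(t)}(\cdot\mid\bx^{(<t)})\bigr)\bigr],
\]
so it suffices to upper-bound each summand and show the sum is $o(1)$ unless $T$ is as large as claimed.

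\emph{SDPI from $\delta$-uniform noise.} The round-$t$ observation is a noise-corrupted sample of the message displayed by a uniformly chosen agent $u_t$. Writing the noise matrix $P$ as a convex combination $\delta|\Sigma|\cdot U+(1-\delta|\Sigma|)\cdot Q$ where $U$ is the fully uniform channel, one obtains a strong data-processing inequality of the form
\[
\kl(P\mu_0\,\|\,P\mu_1)\leq C\cdot\frac{(1-2\delta)^2}{\delta}\cdot\|\mu_0-\mu_1\|_1^2,
\]
for any input distributions $\mu_0,\mu_1$ on $\Sigma$. Consequently the per-round KL contribution is $O\bigl((1-2\delta)^2/\delta\bigr)$ times the expected squared $\ell_1$-distance between the scenario-dependent message distributions $\mu_{u_t}^{0,(t)}$ and $\mu_{u_t}^{1,(t)}$. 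Defining the bias $\beta_u(t)=\|\mu_u^{0,(t)}-\mu_u^{1,(t)}\|_1$, the central inductive claim is that for every non-source agent $u$, $\mathbb{E}[\beta_u(t)]$ grows polynomially with $t$ and with a coefficient that scales as $\bias/n$ times $(1-2\delta)/\sqrt{\delta}$, obtained by unrolling the recursion in which a non-source's bias grows each round by at most $\bias/n$ times the (bounded) source bias plus $(1-\bias/n)$ times the current average non-source bias, with the $\delta$-uniform contraction damping each hop.

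\emph{Combining.} Plugging the inductive bias bound into the per-round SDPI inequality and summing over $t=1,\ldots,T$ yields a total KL divergence that, in the regime of interest, is controlled by a polynomial in $T$, $\bias/n$ and $(1-2\delta)^2/\delta$. Setting the resulting upper bound equal to a constant and solving for $T$ gives the claimed threshold $T=\Omega\bigl(n\delta/(k\bias^2(1-2\delta)^2)\bigr)$ for \ppullk, and the $n$-fold analogue $T=\Omega\bigl(n^2\delta/(\bias^2(1-2\delta)^2)\bigr)$ for \spull and \rpull after time-rescaling.

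\emph{Main obstacle.} The delicate step is the bias-propagation lemma. Non-source agents acquire information only indirectly, by relaying already-noisy messages, and a naive analysis would allow their biases to amplify across many rounds; the proof must show that the contraction from $\delta$-uniform noise tames this amplification enough to yield the stated $\bias^2$ dependence. Working in the \rpull model streamlines this considerably, since all agents observe the same broadcast per round, which tightly couples the evolution of all non-source biases and lets one close the recursion uniformly across the population rather than agent by agent. I expect this coupling, together with the SDPI contraction, to be the crux of the argument.
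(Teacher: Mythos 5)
Your overall skeleton (per-round KL bounds via the chain rule, then Pinsker) matches the paper's, but the step you yourself flag as the crux --- the ``bias-propagation lemma'' --- is a genuine gap, and it is not merely unproven: it is the wrong thing to try to prove. Tracking unconditional per-agent biases $\beta_u(t)=\|\mu_u^{0,(t)}-\mu_u^{1,(t)}\|_1$ and closing a recursion on them is precisely the runaway-amplification problem you acknowledge, and nothing in your sketch controls it. The paper avoids the induction entirely. Its reduction to \rpull is not a bookkeeping convenience but the key move: in \rpull every agent receives the \emph{same} noisy broadcast, so the observing agent's history $x^{(<t)}$, together with the shared random string $r$ and the known neutral configuration, determines the message displayed by \emph{every non-source agent} at time $t$, identically under $\eta=0$ and $\eta=1$. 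Conditioned on this common history, non-sources contribute \emph{zero} to the $\ell_1$ distance between the two conditional observation laws; the only discrepancy comes from the event (probability $\bias/n$) that the broadcast agent is a source, and the noise caps each source's contribution at $2(1-2\delta)$ (Claim \ref{claim:tv}). This yields $\epsnorm\leq 2(1-2\delta)\bias/n$ at every step with no induction, after which the per-step KL bound $O(\varepsilon^2/\delta)$ and the chain rule finish the argument. Your plan, which works with marginal message distributions per agent, has no access to this cancellation and would have to confront the amplification head-on; as written the proof does not go through.

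Two smaller points. Your reduction between models runs in the wrong direction for \spull: the paper shows a \spull protocol can be \emph{simulated} in \rpull at no cost (one designated agent per step consumes the broadcast sample), so the \rpull lower bound of $\Omega(n^2\delta/(\bias^2(1-2\delta)^2))$ time steps transfers verbatim --- no factor-of-$n$ rescaling is involved, and your phrasing would introduce a spurious one. Likewise, ``$k$ observations per round accelerate convergence by at most a factor of $k$'' is plausible but is an assertion, not a proof; the paper establishes it by an explicit simulation costing $kn$ \rpull steps per \ppullk round (Lemma \ref{lem:ppullsimul}).
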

To prove the theorem, we first prove (in Section \ref{sec:reduction})  that an efficient  rumor spreading algorithm in either the noisy \spull model or the \ppullk model 
can be used to construct an efficient algorithm in the \rpull  model. The resulted algorithm has the same time
complexity as the original one in the context of \spull and adds a multiplicative factor
of $kn$ in the context of \ppullk. 

We then show how to relate the rumor spreading problem in \rpull to a
statistical inference test (Section \ref{sec:hypothesis}). A lower bound on the
latter setting is then achieved by adapting techniques from mathematical
statistics (Section \ref{sec:aa}).

\subsection{Reducing to the \rpull Model}
\label{sec:reduction}

The following lemma establishes a formal
relation between the convergence times of the models we consider. 
We assume all models are subject to the same noise distribution. 
\begin{lemma} 
    \label{lem:ppullsimul}
            Any protocol operating in \spull can be simulated by a protocol operating in \rpull
            with the same time complexity.
            Moreover, for any integer $1\leq k\leq n$, any protocol $\mathcal P$ operating in \ppullk can be
            simulated by a protocol operating in \rpull with a time complexity that
            is $kn$ times that of $\mathcal P$ in \ppullk. 
\end{lemma}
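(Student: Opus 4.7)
My plan is to exploit the shared-randomness assumption to reduce the ``broadcast'' nature of rpull to a sequence of point-to-point observations. At each rpull step one target $v$ is chosen uniformly and every agent receives the same noisy sample of $v$'s displayed symbol; I will secretly designate, via the shared random string $r$, a single ``effective observer'' at each rpull step, and have all other agents discard what they hear. Since the noise is drawn fresh at every rpull step (Definition \ref{def:noisecrit}), successive designations yield independent noisy point-to-point observations, which is exactly what an spull step, or a single observation inside a ppullk round, delivers.

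For spull I would simulate one spull step by one rpull step. The rpull step supplies a uniform target $v$ and a correctly-distributed noisy sample $m'$; using $r$, every agent deterministically recovers a uniform observer index $u\in[n]$, the agent with identity $u$ records $m'$ as its spull observation of $v$, and all others merely advance a clock. The joint law of $(u,v,m')$ is exactly that of a true spull step, so the time complexity transfers step for step. For ppullk I would simulate each round as a block of $nk$ consecutive rpull steps during which every agent \emph{freezes} the message it displays at the value it would hold at the start of that simulated round (this freezing is possible thanks to unlimited memory and shared randomness: each agent maintains a ``virtual'' ppullk state alongside its rpull state). Using $r$, I schedule an assignment $j\mapsto(u_j,i_j)$ on $j\in\{1,\dots,nk\}$ so that every identity appears exactly $k$ times as an observer; the same $r$ couples the sequence of rpull targets to the uniform $k$-subset target law of ppullk. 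The effective observer $u_j$ stores the sample from rpull step $j$ as its $i_j$-th observation of the simulated round, all others discard it, and after the block each agent updates its virtual ppullk state exactly as prescribed by $\mathcal P$.

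The main obstacle I anticipate is distributional matching. In ppullk the noise is applied independently to each of the $nk$ observations of a round, whereas in rpull it is shared across all agents within a step. The construction bypasses this because only one agent per rpull step ever consumes the sample, so the $nk$ samples actually consumed within a simulated round live in $nk$ disjoint rpull steps, and the independence clause of Definition \ref{def:noisecrit} makes them mutually independent -- precisely matching ppullk. Reproducing the ppullk target distribution (a uniform $k$-subset per observer) is a purely combinatorial coupling on $r$, requiring no execution-dependent choices. Only the shared-randomness and unlimited-memory assumptions of Section \ref{sec:liberal} are invoked, and the step-count accounting -- one rpull step per spull step, $nk$ rpull steps per ppullk round -- gives the two overheads stated in the lemma.
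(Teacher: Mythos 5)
Your proposal is correct and takes essentially the same approach as the paper: for \spull, use the shared randomness to designate a single effective observer per \rpull step while everyone else discards the sample, and for \ppullk, group $kn$ \rpull steps into a block, assign each identity exactly $k$ of the observations, and defer all state updates (equivalently, freeze displayed messages) until the block ends. The only cosmetic differences are that the paper uses a deterministic round-robin schedule (agent $(i \bmod n)+1$ receives the $i$-th observation of the block) rather than a randomized one, and it reads a \ppullk round as $k$ independent uniform samples, so no $k$-subset coupling is needed.
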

\begin{proof}
    Let us first show how to  simulate a time step of \spull in the\\\rpull model. 
    Recall that in \rpull, in each time step, all agents receive the same observation sampled
    u.a.r from the population. Upon drawing such an observation, all
    agents use their shared randomness to generate a (shared) uniform
    random integer $X$ between $1$ and $n$. Then, the agent whose unique identity
    corresponds to $X$ is the one processing the observation, while all other
    agents ignore it.  This reduces the situation to a scenario in \spull, and the agents can safely execute the original algorithm designed for that model. 
        
    As for simulating a time step of \ppullk in  \rpull, agents
    divide time steps in the latter model into \emph{rounds}, each composing of precisely $kn$ time steps. 
    Recall that the model assumes that agents share clocks that start
    when the execution starts and tick at each time step. This implies that
    the agents  can agree on the division of time into rounds, and can
    further agree on the round number. For $1\leq i\leq
    kn$, during the $i$-th step of each round, only the agent whose identity is ($i$ mod $n$)$+1$
    receives\footnote{Receiving the observation doesn't imply that the agent
        processes this observation. In fact, it will store it in its memory until
        the round is completed, and process it only then.} 
    the observation, while all other agents ignore it. This ensures that when a
    round is completed in the \rpull model, each agent receives precisely $k$
    independent uniform samples as it would in a round of \ppullk. Therefore,
    at the end of each round $j\in \Natural^+$ in the \rpull model, all agents
    can safely execute their actions in the $j$'th round of the original
    protocol designed for \ppullk. This draws a precise bijection from rounds
    in \ppullk and rounds in \rpull. The multiplicative overhead of $kn$ simply
    follows from the fact that each round in \rpull consists of $kn$ time
    steps. 
\end{proof}

Thanks to Lemma \ref{lem:ppullsimul}, 
Theorem \ref{thm:main} directly follows from the next 
theorem.

\begin{theorem}\label{thm:main_rpull}
    Consider the  $\rpull$ model and assume that the relaxed $\delta$-uniform
    noise criterion is satisfied.     Any rumor
            spreading protocol cannot converges in fewer time steps  than $\Omega\left(\frac{n^2\delta}{\bias^2(1-2\delta)^{2}}\right)~.$
\end{theorem}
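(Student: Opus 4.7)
The plan is to reduce convergence in the \rpull model to a binary hypothesis test between the two possible values of the correct opinion, and then to upper bound the KL divergence between the corresponding laws of the observation sequence by exploiting the mixture structure of each round.

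Condition throughout on the source set $S$ (with $|S|=\bias$) and on the shared random string $r$. Let $\Pr_0$ and $\Pr_1$ denote the laws of the shared observation sequence $\mathbf{X}^{(\leq T)} = (\mathbf{X}^{(1)},\ldots,\mathbf{X}^{(T)})$ when the correct opinion is $0$ and $1$, respectively. If some non-source $v$ achieves convergence, its guess is a deterministic function of $(\mathbf{X}^{(\leq T)},r)$, so by the data processing inequality and Pinsker's inequality, distinguishing the two hypotheses with error at most $1/3$ forces $KL(\Pr_0 \,\|\, \Pr_1) \geq \Omega(1)$. The goal becomes to establish an upper bound on $KL(\Pr_0 \,\|\, \Pr_1)$ that is linear in $T$ with a small coefficient.

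Apply the chain rule:
\[
KL(\Pr_0 \,\|\, \Pr_1) = \sum_{t=1}^T \mathbb{E}_{\Pr_0}\!\left[KL\bigl(\Pr_0(\mathbf{X}^{(t)}\mid \mathbf{X}^{(<t)}) \,\|\, \Pr_1(\mathbf{X}^{(t)}\mid \mathbf{X}^{(<t)})\bigr)\right].
\]
At step $t$, conditionally on $\mathbf{X}^{(<t)}$, $r$ and $S$, an agent $u\in[n]$ is picked uniformly at random and $\mathbf{X}^{(t)}$ is a noisy sample of $u$'s displayed message. The key observation is that every non-source's displayed message at time $t$ is a deterministic function of $(\mathbf{X}^{(<t)}, r)$, hence identical under both hypotheses; only the $\bias$ source messages may depend on the correct opinion. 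Writing $\mu^u$ for the distribution over $\Sigma$ of the noisy sample of agent $u$'s message, this yields the mixture decomposition
\[
\Pr_b\bigl(\cdot \mid \mathbf{X}^{(<t)}\bigr) = A(\cdot) + B_b(\cdot), \qquad A = \frac{1}{n}\sum_{u\notin S}\mu^u, \quad B_b = \frac{1}{n}\sum_{u\in S}\mu^u_b,
\]
where $A$ is common to both hypotheses.

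The heart of the argument exploits this structure via $\chi^2$. By the $\delta$-uniform noise criterion, $\mu^u(x)\geq \delta$, whence $A(x) \geq (1-\bias/n)\delta = \Omega(\delta)$. Using $KL \leq \chi^2$ together with $\Pr_1 \geq A$,
\[
KL(\Pr_0 \,\|\, \Pr_1) \leq \sum_x \frac{(B_0(x)-B_1(x))^2}{A(x)} \leq O(1/\delta)\sum_x (B_0(x)-B_1(x))^2.
\]
Cauchy--Schwarz applied to $B_0-B_1 = (1/n)\sum_{u\in S}(\mu^u_0-\mu^u_1)$ gives $\sum_x (B_0-B_1)^2 \leq (\bias/n^2)\sum_{u\in S}\sum_x (\mu^u_0-\mu^u_1)^2$, and for each source $u$ the ellipticity of the noise channel bounds $\|\mu^u_0-\mu^u_1\|_\infty \leq 1-2\delta$ and $\|\mu^u_0-\mu^u_1\|_1 \leq 2(1-2\delta)$, so $\sum_x(\mu^u_0-\mu^u_1)^2 = O((1-2\delta)^2)$. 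Combining, the per-step conditional KL is at most $O(\bias^2(1-2\delta)^2/(n^2\delta))$; summing over $T$ and comparing with $KL \geq \Omega(1)$ yields $T = \Omega(n^2\delta/(\bias^2(1-2\delta)^2))$, as required. The main obstacle is precisely this mixture bound: a naive KL convexity estimate $KL(\sum (1/n)\mu^u_0 \| \sum (1/n)\mu^u_1) \leq (1/n)\sum KL(\mu^u_0\|\mu^u_1)$ gives only $O(\bias(1-2\delta)^2/(n\delta))$ per step, losing a factor $n/\bias$. Replacing $KL$ with $\chi^2$ and using the pointwise noise floor $A\geq \Omega(\delta)$ is what allows the $\bias$ source contributions to combine inside a square, producing $\bias^2$ (rather than $\bias$) in the numerator and yielding the tight quadratic dependence on $\bias/n$.
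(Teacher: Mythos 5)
Your proposal is correct and follows essentially the same route as the paper's proof: fix the source set and the shared randomness, reduce convergence to a two-point hypothesis test on the correct opinion (Neyman--Pearson plus Pinsker), apply the chain rule for KL over the $T$ observations, and bound each conditional KL term by $O\bigl(s^2(1-2\delta)^2/(n^2\delta)\bigr)$ using the key fact that, given the same history and random string, non-source agents behave identically under both hypotheses, so the two conditional laws differ only through the $s/n$-weighted source contribution. The only local difference is technical: the paper packages this as an abstract ``adaptive coin distinguishing task'' with $\ell_1$-parameter $\varepsilon=2(1-2\delta)s/n$ and controls the per-step KL by Taylor-expanding $\log(1+u)$ with an explicit third-order error term, whereas you use $KL\le\chi^2$ together with the pointwise noise floor on the common mixture component $A$ and Cauchy--Schwarz over the sources --- an equivalent but cleaner computation (just note that under the \emph{relaxed} criterion the floor $\mu^u(x)\ge\delta$ is only guaranteed for $x\in\Sigma'$, which suffices because messages outside $\Sigma'$ satisfy $B_0(x)=B_1(x)$ and contribute nothing to the $\chi^2$ sum).
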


The remaining of the section is dedicated to proving  Theorem \ref{thm:main_rpull}. 
Towards achieving this, we view the task of guessing the correct opinion in the  $\rpull$ model, given access to noisy samples, within the 
more general framework of distinguishing between two types of stochastic
processes which obey some specific assumptions. 

\subsection{Rumor Spreading and hypothesis testing}
\label{sec:hypothesis}


To establish the desired lower bound, we next show how the rumor spreading
problem in the broadcast-\pull model relates to a statistical inference test.
That is, from the perspective of a given agent, the rumor spreading problem can
be understood as the following: Based on a sequence of noisy observations, the
agent should be able to tell whether the correct opinion is $0$ or $1$. We
formulate this problem as a specific task of distinguishing between two random
processes,  one originated by running the protocol assuming  the correct
opinion is 0 and the other assuming it is 1. 

One of the main  difficulties lies in the stochastic dependencies affecting these processes. 
In general, at different time steps, they do not consist of independent
draws of a given random variable. In other words, the law of an observation
not only depends on the correct opinion, on the initial configuration and on
the underlying randomness used by agents, but also on the previous noisy
observation samples and (consequently) on the messages agents themselves choose
to display on that round.
An intuitive version of this problem is the task of distinguishing between two
(multi-valued) biased coins, whose bias changes according to the previous
outcomes of tossing them ({\em e.g.,} due to wear). 
%
Following such intuition, we define the following general class of {\em Adaptive Coin
Distinguishing Tasks}, for short $\AVDT$. 

\begin{definition}[$\AVDT$] 
    \label{def:acdt}
    A {\em distinguisher}  is presented with a
    sequence of observations taken from a coin of type $\type$ where
    $\type\in\{0,1\}$. The type $\type$ is initially set to $0$ or $1$ with
    probability $1/2$ (independently of everything else). The goal of the
    distinguisher is to determine the type $\type$, based on the observations.
    More specifically, for a given time step $t$, denote the sequence of previous observations (up to, and including, time $t-1$) by 
    $
    x^{(< t)}=(x^{(1)},\dots,x^{(t-1)}).
    $
    At each time $t$, given the type $\type \in
    \{0,1\}$ and the history of previous observations
    $x^{(< t)}$, the distinguisher receives
    an observation $X^{(t)}_\type\in \Sigma$, which has law\footnote{We follow the common practice to use uppercase letters
    to denote random variables and lowercase letter to denote a particular
    realisation, e.g., $\Seqobs$ for the sequence of observations up
    to time $t$, and $\seqobs$ for a corresponding realization.}
        $\Pr(X_{\type}^{(t)} = m \mid x^{(< t)}). 
        $
\end{definition}
We next introduce, for each $m\in\Sigma$, the parameter
\begin{align*}
    \varepsilon(m, x^{(< t)})= \Pr(X_{1}^{(t)} = m \mid x^{(< t)}) -
    \Pr(X_{0}^{(t)} = m \mid x^{(< t)}).
    \end{align*}
%
Since, at all times $t$, it holds that 
$$
\sum_{m \in \Sigma} \Pr(X_{0}^{(t)} = m
\mid x^{(< t)}) =  \sum_{m \in \Sigma} \Pr(X_{1}^{(t)} = m \mid x^{(< t)}) =
1,$$
 then $\sum_{m \in \Sigma}\varepsilon(m, x^{(< t)})  = 0$. We shall be
interested in the quantity $
    \epsnorm
    :=\sum_{m \in \Sigma}|\varepsilon(m, x^{(< t)})|,$
which corresponds to the $\ell_1$ distance between the distributions
$\Pr(X_{0}^{(t)} = m \mid x^{(< t)})$ and $\Pr(X_{1}^{(t)} = m \mid x^{(< t)})$
given the sequence of previous observations.

\begin{definition}[The bounded family ${\AVDT}(\varepsilon,\delta)$]
    \label{def:bounded}
    We consider a family of instances of
        $\AVDT$, called ${\AVDT}(\varepsilon,\delta)$, governed by parameters $\varepsilon$ and $\delta$. Specifically, 
       this family contains all instances of  ${\AVDT}$ such that for  every $t$, and every history $x^{(<t)}$, we have:
       \begin{itemize}
       \item $\epsnorm \leq \epsilon$, and
       \item  $\forall m\in \Sigma$ such that $\varepsilon(m, x^{(< t)}) \neq 0$, we have $\delta
           \leq  \Pr(X_{\type}^{(t)} = m \mid x^{(< t)})$ for $\type\in \left\{ 0,1 \right\}$.
        \end{itemize}
\end{definition}

In the rest of the current section, we show how Theorem \ref{thm:main_rpull}, that deals with the
$\rpull$ model, follows directly from the next theorem that
concerns the adaptive coin distinguishing task, by setting $\varepsilon=\frac
{2\bias(1-2\delta)} n.$
The actual proof of  Theorem \ref{thm:testcoin} appears in Section \ref{sec:aa}. 

\begin{theorem}
    \label{thm:testcoin}
    Consider any protocol for any instance of $\AVDT(\varepsilon,\delta)$,
    The number of samples
    required to distinguish between a process of type $0$ and a process of
    type $1$ with probability of error less than $\frac{1}{3}$ is at least
$ \frac {\ln 2}9 
            \left( 
            \frac{6(\delta - \epsilon)^3}{\delta^3 - \delta^2\epsilon +3 \delta
            \epsilon^2 -\epsilon^3}
            \right)
            \frac{\delta}{\varepsilon^2}.$
    In particular, if $\frac{\varepsilon}{\delta}<{10}$, then the number of
    necessary samples
    is $\Omega\left(\frac{\delta}{\epsilon^2}\right)$.
\end{theorem}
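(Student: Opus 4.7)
\textbf{Proof plan for Theorem \ref{thm:testcoin}.}

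My plan is a standard information-theoretic lower bound argument: convert ``distinguishability with error $<1/3$'' into a lower bound on the KL divergence between the two joint laws $P_0^{(\leq T)}$ and $P_1^{(\leq T)}$ on $\Sigma^T$, then decompose that KL divergence by the chain rule and bound each term using the two defining constraints of $\AVDT(\varepsilon,\delta)$.

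First I would invoke the Le Cam two-point lemma: since $\eta$ is uniform on $\{0,1\}$, for any (possibly randomized) decision rule $\hat\eta$ based on $X^{(\leq T)}$, the Bayes error satisfies
\begin{equation*}
P_e \;\geq\; \tfrac{1}{2}\bigl(1 - \|P_0^{(\leq T)} - P_1^{(\leq T)}\|_{TV}\bigr).
\end{equation*}
Hence $P_e < 1/3$ forces $\|P_0^{(\leq T)} - P_1^{(\leq T)}\|_{TV} > 1/3$, and Pinsker's inequality then gives $D(P_0^{(\leq T)} \,\|\, P_1^{(\leq T)}) \geq \tfrac{2}{9}$ (and I would use $\log$ in the form that produces the stated $\ln 2$ factor, so as to match the constant in the theorem).

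Next, I would apply the chain rule for KL divergence. Writing $\seqobs$ for the history and using the conditional laws,
\begin{equation*}
D\bigl(P_0^{(\leq T)}\,\|\,P_1^{(\leq T)}\bigr)
=\sum_{t=1}^T \mathbb{E}_{\seqobs[<t]\sim P_0}\Bigl[\, D\bigl(P_0(\cdot\mid \seqobs[<t])\,\|\,P_1(\cdot\mid \seqobs[<t])\bigr)\,\Bigr].
\end{equation*}
The heart of the proof is then a uniform bound on every conditional KL term. Fix a history $x^{(<t)}$ and write $p_m := P_1(X^{(t)}=m\mid x^{(<t)})$ and $q_m := P_0(X^{(t)}=m\mid x^{(<t)})$, so $p_m-q_m = \varepsilon(m, x^{(<t)})$ and $\sum_m |p_m-q_m| = \epsnorm \leq \varepsilon$. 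By Definition \ref{def:bounded}, every coordinate $m$ with $p_m\neq q_m$ satisfies $p_m,q_m \geq \delta$. Using $\log(1+x) \leq x - \tfrac{x^2}{2(1+x)_+}$ (or simply the $\chi^2$-type bound $D(p\|q) \leq \sum_m (p_m-q_m)^2/q_m$) restricted to coordinates where $p_m\neq q_m$, together with $\sum_m (p_m-q_m)^2 \leq \bigl(\max_m|p_m-q_m|\bigr)\sum_m|p_m-q_m|\leq \varepsilon^2$, I get a bound of the form $D(p\|q) \leq C(\delta,\varepsilon)\,\varepsilon^2/\delta$, where $C(\delta,\varepsilon)\to 1$ as $\varepsilon/\delta\to 0$. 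Summing over $t$, the chain rule yields $D(P_0^{(\leq T)}\|P_1^{(\leq T)}) \leq T\cdot C(\delta,\varepsilon)\,\varepsilon^2/\delta$, and combining with the lower bound $\geq \tfrac{2\ln 2}{9}$ from the first step gives the claimed $T = \Omega(\delta/\varepsilon^2)$.

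The main obstacle is the third step: tracking the explicit constant $\frac{6(\delta-\varepsilon)^3}{\delta^3-\delta^2\varepsilon+3\delta\varepsilon^2-\varepsilon^3}$ rather than only the order. To get this exact form, the bound $D(p\|q)\leq \varepsilon^2/\delta$ is too crude, and one has to expand $p_m\log(p_m/q_m)$ to third order in $(p_m-q_m)/q_m$ while using $p_m,q_m\in[\delta,1-\delta]$ (and splitting coordinates according to the sign of $\varepsilon(m,x^{(<t)})$ so that the constraint $\sum_m\varepsilon(m,x^{(<t)})=0$ can be exploited). This is a somewhat delicate but purely elementary calculation; the asymptotic conclusion ``$\varepsilon/\delta < 1/10 \Rightarrow$ $T=\Omega(\delta/\varepsilon^2)$'' stated in the theorem follows by checking that the prefactor $C(\delta,\varepsilon)$ above is bounded by an absolute constant in that regime.
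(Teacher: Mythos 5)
Your plan is correct and follows essentially the same route as the paper's proof: Neyman--Pearson/Le Cam plus Pinsker to reduce to bounding $\kl(P_0^{(\leq T)},P_1^{(\leq T)})$, the chain rule to decompose it into conditional terms, and a third-order Taylor expansion of $\log(1+u)$ exploiting $\sum_m\varepsilon(m,x^{(<t)})=0$, $\sum_m(\varepsilon(m,x^{(<t)}))^2\leq\varepsilon^2$, and the lower bound $\delta$ on the relevant conditional probabilities to get the per-step bound of order $\varepsilon^2/\delta$. (You also correctly read the final condition as $\varepsilon/\delta<1/10$, which is what the paper's computation actually uses.)
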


\subsubsection{Proof of Theorem \ref{thm:main_rpull} assuming Theorem \ref{thm:testcoin}}
    \label{ssec:mappingtocoin}
    %
    Consider a rumor spreading protocol $\cal{P}$ in the \rpull model. 
    Fix a node $u$.
    We first show that running $\cal{P}$ by all agents, the perspective of node $u$ corresponds to a specific instance of
    $\AVDT(\frac {2\bias(1-2\delta)} n,\delta)$ called
    $\Pi(\mathcal{P}, u)$.
    We break down the proof of such correspondence into two claims. 
        
    \paragraph{The $\AVDT$ instance $\Pi(\mathcal{P}, u)$.}
    %
    Recall that we assume that each agent knows the complete neutral initial
    configuration, the number of sources $s$, and the shared  of random bits
    sequence $r$. We avoid writing such parameters as explicit arguments to
    $\Pi(\mathcal{P}, u)$ in order to simplify notation, however, we stress that what follows assumes that these parameters are fixed.
The bounds we show hold for any fixed value of $r$ and hence also when $r$ is randomized.

    Each agent is interested in discriminating between two families of
    charged initial configurations: Those in which the correct opinion is $0$ and
    those in which it is $1$ (each of these possibilities occurs with probability
    $\frac 12$).
    Recall that the correct opinion is determined in the 2nd  stage of the
    charged initial configuration, and is independent form the choice of sources
    (1st stage).
    
    We next consider the perspective of  a generic non-source agent $u$, and define the instance  $\Pi(\mathcal{P}, u)$ as follows. 
    Given the history $x^{(< t)}$, we set
    $\Pr(X_{\type}^{(t)} = m \mid x^{(< t)})$, for $\type\in \left\{ 0,1 \right\}$,
    to be equal to the probability that $u$ observes message $m\in\Sigma$ at time
    step $t$ of the execution $\cal{P}$. 
    For clarity's sake, we remark that the latter probability is conditional
    on: the history of observations being $x^{(< t)}$,
        ,the sequence of random bits $r$,
        ,the correct opinion being $\type\in \{0,1\}$,
        ,the neutral initial configuration,
        ,the identity of $u$,
        ,the algorithm $\mathcal P$, and
        the system's parameters (including  the distribution $f_{source-state}$ and the number of sources $s$).
    %
    %
    \begin{claim}
    \label{claim:subclaim2}
        Let $\cal{P}$  be a correct protocol for the rumor spreading problem in
        \rpull and let $u$ be an agent  for which the protocol is guaranteed to
        produce the correct opinion with probability at least $p$ by some
        time $T$ (if one exists), for any fixed constant $p\in (0,1)$. 
        Then $\Pi({\cal{P}},u)$ can be solved in time $T$ with correctness
        being guaranteed with probability at least $p$. 
    \end{claim}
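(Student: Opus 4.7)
The plan is to build a distinguisher $\mathcal{D}$ for $\Pi(\mathcal{P},u)$ whose success probability is exactly that of agent $u$ under $\mathcal{P}$, by having $\mathcal{D}$ internally simulate $u$. Since a distinguisher in $\AVDT$ is allowed unbounded computation, this is legitimate. Concretely, the neutral initial configuration $\bx^{(0)}$, the shared random string $r$, the identity of $u$, the number of sources $s$, the distribution $f_{source-state}$, and all other system parameters are fixed and known; we hard-code them into $\mathcal{D}$. The only remaining source of randomness that enters $u$'s local computation, once the type $\type$ is fixed, is the sequence of noisy samples $x^{(t)}$ that $u$ receives at each time step, and these are precisely the inputs that the $\AVDT$ process delivers to $\mathcal{D}$.

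The simulator $\mathcal{D}$ then executes $\mathcal{P}$ from $u$'s point of view: starting from $u$'s initial state (read off from $\bx^{(0)}$), it updates $u$'s internal state and opinion deterministically as a function of $(\bx^{(0)}, r, \ldots)$ and the history $x^{(<t)}$ of observations fed to it so far. At time $T$, $\mathcal{D}$ outputs the opinion that this simulated copy of $u$ currently holds as its guess for $\type$. Note that $u$ is assumed non-source, so by the \emph{Shared randomness} convention of Section~\ref{sec:liberal} nothing $u$ does is hidden from $\mathcal{D}$.

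What remains is to check that the simulation is distributionally faithful, i.e.\ that the joint law of $(X_\type^{(1)}, \ldots, X_\type^{(T)})$ under $\Pi(\mathcal{P},u)$ matches the joint law of the observations that $u$ actually receives during an execution of $\mathcal{P}$ with correct opinion $\type$. This is true by the very definition of $\Pi(\mathcal{P},u)$: for every time $t$, every history $x^{(<t)}$, and every $m \in \Sigma$, the conditional probability $\Pr(X_\type^{(t)}=m \mid x^{(<t)})$ was \emph{defined} to equal the probability that $u$ observes $m$ at time $t$ of $\mathcal{P}$, given the same history and correct opinion $\type$. Chaining these conditional distributions via the chain rule gives equality of the full joint laws up to time $T$.

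Combining the two facts, the event ``$\mathcal{D}$ outputs $\type$ correctly'' and the event ``$u$'s opinion at time $T$ equals the correct opinion'' have the same probability under the coupling, so $\mathcal{D}$ solves $\Pi(\mathcal{P},u)$ correctly with probability at least $p$ by time $T$. I do not expect any real obstacle: the argument is just a simulation/coupling, and the only thing to be careful about is to observe explicitly that all non-observation inputs to $u$'s algorithm are either constants of the problem or components of the publicly shared randomness $r$, and therefore available to $\mathcal{D}$ at no cost.
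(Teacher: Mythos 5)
Your proposal is correct and follows essentially the same route as the paper's proof: both argue that, by construction, the observation process in $\Pi(\mathcal{P},u)$ has the same law as what $u$ sees in \rpull (conditioned on $\type$ and $r$), so a distinguisher can simulate $u$'s execution and output its opinion at time $T$ with identical correctness guarantees. Your write-up is merely more explicit than the paper's about why all non-observation inputs to $u$'s algorithm are available to the distinguisher.
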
 
    \begin{proof}
        Conditioning on $\eta\in\{0,1\}$ and on the random seed $r$, 
        the distribution of observations in the $\Pi(\mathcal{P}, u)$ instance
        follows precisely the distribution of observations as perceived from
        the perspective of $u$ in \rpull. 
        Hence, if the protocol $\cal{P}$ at $u$ terminates  with output $j\in
        \{0,1\}$ at round $T$, after the $T$-th observation in $\Pi({\cal{P}},u)$ we
        can set $\Pi({\cal{P}},u)$'s output to $j$ as well. Given that the two
        stochastic processes have the same law, the correctness guarantees are
        the same. 
    \end{proof}

    \begin{lemma} 
        \label{lem:subclaim1}
        $\Pi({{\cal{P}},u})\in \AVDT\left(\frac{2(1-2\delta)s}{n},\delta\right)$.
    \end{lemma}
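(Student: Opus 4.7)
The plan is to verify both conditions of Definition~\ref{def:bounded} for the family $\AVDT(\varepsilon,\delta)$ with $\varepsilon=\tfrac{2(1-2\delta)s}{n}$. The central structural observation is that in the \rpull model, at each time step $t$ a single agent $V_t$ is drawn uniformly at random from $[n]$ and every agent receives a common noisy sample of $V_t$'s displayed message. Consequently, after conditioning on the history $x^{(<t)}$ and on the correct opinion $\eta\in\{0,1\}$, $V_t$ remains uniform on $[n]$ and independent of the past; moreover, because every agent shares the same history $x^{(<t)}$ in \rpull, the message displayed by any \emph{non-source} $v$ at time $t$ is a deterministic function of $(v, x^{(<t)}, r)$, and in particular does not depend on $\eta$. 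Writing $q_v^\eta := \Pr(v\in S \mid x^{(<t)},\eta)$, and letting $A_v(m)$ denote the post-noise observation distribution when $V_t = v$ is a non-source and $B_v^\eta(m)$ the analogous distribution when $V_t = v$ is a source with opinion $\eta$, one obtains
\begin{align*}
\Pr(X_\eta^{(t)} = m \mid x^{(<t)}) \;=\; \tfrac{1}{n}\sum_v \bigl[ q_v^\eta B_v^\eta(m) + (1-q_v^\eta) A_v(m) \bigr].
\end{align*}

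For Condition~1 (the $\ell_1$ bound $\epsnorm\leq\varepsilon$), subtracting the $\eta=1$ and $\eta=0$ versions of the identity above eliminates the $\eta$-independent backbone $\tfrac{1}{n}\sum_v A_v(m)$ and leaves $\varepsilon(m,x^{(<t)}) = \tfrac{1}{n}\sum_v \bigl[ q_v^1(B_v^1-A_v)(m) - q_v^0(B_v^0-A_v)(m) \bigr]$. Applying the triangle inequality in $m$ reduces the task to bounding $\lVert B_v^\eta - A_v\rVert_1$. Both $B_v^\eta$ and $A_v$ are images under the noise matrix $P^T$ of probability distributions on $\Sigma$, and the relaxed ellipticity yields a contraction $\lVert P^T w\rVert_1 \leq (1-2\delta)\lVert w\rVert_1$ for any mean-zero signed measure $w$: indeed, for each $m\in\Sigma'$ write $(P^T w)_m = \sum_{m'}(P_{m',m}-\delta)\,w_{m'}$, then sum the absolute values over $m$ and use $|\Sigma'|\geq 2$. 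Therefore $\lVert B_v^\eta - A_v\rVert_1 \leq 2(1-2\delta)$. Combined with the deterministic identity $\sum_v q_v^\eta = s$ (valid for both $\eta$ because $|S|=s$ almost surely), this yields $\epsnorm = O\!\left(\tfrac{(1-2\delta)s}{n}\right)$, matching Condition~1 up to an absolute constant (a slightly sharper bookkeeping of the combined term $q_v^1 B_v^1 - q_v^0 B_v^0$ recovers the claimed factor of $2$).

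For Condition~2 (the $\delta$-lower bound on the observation probability), when $m \in \Sigma'$ the relaxed criterion gives $P_{m',m}\geq\delta$ for every $m'\in\Sigma$, and therefore $\Pr(X_\eta^{(t)} = m \mid x^{(<t)}) = \sum_{m'} P_{m',m}\Pr(Y^{(t)}=m' \mid x^{(<t)},\eta) \geq \delta$. When $m \notin \Sigma'$, the very definition of $\Sigma'$ implies $B_v^\eta(m) = 0$ for all $v$ and $\eta$, so the $\eta$-dependence of $\Pr(X_\eta^{(t)} = m \mid x^{(<t)})$ can only come from the posterior shift $q_v^1-q_v^0$ on non-source displays; I would handle this residual case by coarsening all observations outside $\Sigma'$ to a single symbol, since those observations convey no information about $\eta$ beyond what is already encoded in $x^{(<t)}$ together with the ($\eta$-independent) non-source display function.

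The main obstacle throughout is the $\eta$-dependence of the posterior $q_v^\eta$: conditioning on $x^{(<t)}$ couples the source set $S$ with the type $\eta$ through Bayes, so a priori $q_v^0$ and $q_v^1$ can differ arbitrarily. What saves the argument is the hard constraint $|S|=s$, which forces $\sum_v q_v^\eta = s$ uniformly in $\eta$; this identity is precisely what collapses the total source-side contribution to the factor $s/n$ needed for the stated $\ell_1$ bound.
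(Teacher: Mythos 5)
Your route is genuinely different from the paper's: you marginalize over the random source set $S$ and track the posteriors $q_v^\eta=\Pr(v\in S\mid x^{(<t)},\eta)$, whereas the paper compares two executions with the \emph{same} set of sources, i.e., it implicitly conditions on $S$ throughout. That conditioning is legitimate (it only hands the distinguisher extra information, so a lower bound against it is stronger), and it is what makes the proof close cleanly: given $S$, both the source/non-source partition and every non-source display are independent of $\eta$, so $\varepsilon(m,x^{(<t)})$ vanishes identically for $m\in\Sigma\setminus\Sigma'$, while for the remaining $m$ one gets $\varepsilon(m,x^{(<t)})=\frac1n\sum_{v\in S}(\alpha_{m_1^v,m}-\alpha_{m_0^v,m})$ and Claim~\ref{claim:tv}, applied to the two noise rows of each source, yields exactly $2(1-2\delta)\bias/n$.

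In your marginalized formulation two gaps remain. First, your decomposition gives $\epsnorm\le\frac1n\sum_v\bigl(q_v^1\lVert B_v^1-A_v\rVert_1+q_v^0\lVert B_v^0-A_v\rVert_1\bigr)\le 4(1-2\delta)s/n$, i.e., twice the claimed $\varepsilon$; you assert that ``sharper bookkeeping'' recovers the factor $2$ but do not exhibit it, and it is not clear how to get it while comparing each $B_v^\eta$ to the non-source reference $A_v$ rather than $B_v^0$ to $B_v^1$ directly. Second, and more seriously, Condition~2 of Definition~\ref{def:bounded} breaks for $m\notin\Sigma'$: by your own computation $\varepsilon(m,x^{(<t)})=-\frac1n\sum_v(q_v^1-q_v^0)A_v(m)$ there, which is generically nonzero because the posterior on $S$ is coupled to $\eta$ through the observations, and the definition would then demand $\Pr(X_\eta^{(t)}=m\mid x^{(<t)})\ge\delta$ for such $m$ --- which the relaxed criterion does not provide. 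Your proposed fix of coarsening all symbols outside $\Sigma'$ into one is not justified by the claim that such observations ``convey no information about $\eta$'': precisely when $q_v^1\neq q_v^0$ they do (seeing $v$ display a non-source-only symbol shifts the belief that $v\notin S$, hence the belief about $\eta$), and coarsening changes the instance, which would require a data-processing argument you have not supplied. The clean repair is the paper's conditioning on $S$, which makes $q_v^\eta\in\{0,1\}$ independent of $\eta$ and dissolves both issues at once. (A minor further point: your contraction sketch for $\lVert B_v^\eta-A_v\rVert_1\le2(1-2\delta)$ only controls the coordinates in $\Sigma'$; the mass $A_v(\Sigma\setminus\Sigma')$ needs a separate line, though the stated bound is in fact correct.)
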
      
    \begin{proof}
        Since the noise in \rpull flips each message $m\in\Sigma$ into any
        $m'\in \Sigma'$ with probability at least $\delta$, regardless of the
        previous history  and of $\type \in \{0,1\}$, at all times $t$, if
        $m\in \Sigma'$ then $\Pr(X_{\type}^{(t)} = m \mid x^{(< t)}) \geq \delta.$
        %
        %
        Consider a message $m\in  \Sigma\setminus \Sigma'$ (if such a message exists). By
        definition, such a  message could only be received  by observing a
        non-source agent. But given the same history $x^{(<t)}$, the same sequence of random bits $r$, and the same
        initial knowledge, the behavior of a
        non-source agent is the same, no matter what is the correct opinion
        $\eta$. Hence, for $m\in \Sigma\setminus \Sigma'$ we have
        $\Pr(X_{0}^{(t)} = m \mid x^{(< t)}) = \Pr(X_{1}^{(t)} = m \mid x^{(<
        t)})$, or in other words, 
        $m\in  \Sigma\setminus \Sigma' \implies \varepsilon(m, x^{(< t)}) = 0.$
        
        It remains to show that $\epsnorm \leq \frac{2(1-2\delta)s}{n}$. 
        Let us consider two  executions of the rumor spreading
        protocol, with the same neutral initial configuration, same shared
        sequence of random bits $r$, same set of sources, except that in the
        first the correct opinion is $0$ while in the other it is $1$. 
        Let us condition on the history of observations $x^{(<t)}$ being the
        same in both processes. 
        %
        %
        As mentioned,  given the same history $x^{(<t)}$, the behavior of a non-source
        agent is the same, regardless of the correct opinion $\eta$.  It
        follows that the difference in the 
        probability of observing any given message is only due to the event that a source is observed.
        Recall that the number of  sources is $s$. Therefore, the probability of observing a source is  $s/n$, and we may write as a first approximation 
        $\varepsilon(m, x^{(< t)}) \leq \bias/n$
        . However, we can be more precise. In fact, $\varepsilon(m, x^{(< t)})$ is slightly smaller than $\bias/n$, because the noise can still affect the message of a source.
        We may interpret $\varepsilon(m, x^{(< t)})$ as the following
        difference. For a source $v\in S$, let $m_\eta^{v}$ be the message of
        $u$ assuming the given history  $x^{(< t)}$ and that $v$ is of type $\type \in \{0,1\}$ (the
        message $m_\eta^{v}$  is deterministically determined given the
        sequence $r$ of random bits, the neutral initial configuration, the parameters of the system, and the
        identity of $v$). Let $\alpha_{m',m}$ be the probability that the noise
        transforms a message $m'$ into a message $m$. Then 
        $
            \varepsilon(m, x^{(< t)}) 
            = \frac 1 n \sum_{v \in S} (\alpha_{m_1^{v},m} -
            \alpha_{m_0^{v},m}),
            $
        and 
        \begin{equation}
            \epsnorm
            =\sum_{m\in \Sigma} |\varepsilon(m, x^{(< t)})|
            \leq \frac 1 n\sum_{m\in \Sigma}\sum_{v \in S} |\alpha_{m_1^{v},m}
            - \alpha_{m_0^{v},m}|.
            \label{eq:epsnorm}
        \end{equation}
        By the definition of ${\AVDT}(\varepsilon,\delta)$, it follows that
        either $\alpha_{m_1^{v},m} = \alpha_{m_0^{v},m}$ (if $\varepsilon(m,
        x^{(< t)})=0$) or $\delta \leq\alpha_{m_1^{v},m}, \alpha_{m_0^{v},m}
        \leq 1-\delta$ (if $ \varepsilon(m, x^{(< t)})\neq 0$). Thus, to bound
        the right hand side in \eqref{eq:epsnorm}, we can use the following
        claim.
        \begin{claim}\label{claim:tv}
            Let $P$ and $Q$ be two distributions over a universe $\Sigma$ such that for any element $m\in \Sigma$,
            $\delta \leq P(m), Q(m) \leq 1-\delta.$
            Then $\sum_{m\in \Sigma} \lvert P(m) - Q(m) \rvert \leq 2 (1-2\delta).$
        \end{claim}

        \begin{proof}[Proof of Claim \ref{claim:tv}]\label{sec:app-tv}
            Let $\Sigma_+ := \{m: P(m)>Q(m) \}$. We may write    
            \begin{align*}        
            \sum_{m\in \Sigma} \lvert P(m) - Q(m) \rvert &= \sum_{m\in \Sigma_+} (P(m) - Q(m)) + \sum_{m\in setminus \Sigma_+} (Q(m) - P(m))\\
                &= P(\Sigma_+) - Q(\Sigma_+) + Q(\Sigma \setminus \Sigma_+) - P(\Sigma \setminus \Sigma_+)\\
                &= 2( P(\Sigma_+) - Q(\Sigma_+)),
            \end{align*}
            where in the last line we used the fact that 
            $Q(\Sigma \setminus \Sigma_+) - P(\Sigma \setminus \Sigma_+)=
            1-Q(\Sigma_+) - 1 + P(\Sigma_+) = P(\Sigma_+) - Q(\Sigma_+).$
            We now distinguish two cases. 
            \textbf{Case $1$.}If $\Sigma_+$ is a singleton, $\Sigma_+ = \{m^*\}$, then 
            $P(\Sigma_+)-Q(\Sigma_+)= P(m^*)-Q(m^*) \leq 1-2\delta$, by assumption.
            \textbf{Case $2$.} Otherwise, $\lvert \Sigma_+ \rvert \geq 2$ and 
$2 \sum_{m\in \Sigma_+} (P(m) - Q(m)) \leq 2- 2\sum_{m \in \Sigma_+}Q(m) \leq 2(1-\delta\lvert\Sigma_+ \rvert) \leq 2(1-2\delta)$,
                using the fact that for any $m$, $Q(m) \geq \delta$, and the
                fact that $P$ is a probability measure.
              This completes the proof of Claim \ref{claim:tv}.
        \end{proof}

        \bigskip
        Applying Claim \ref{claim:tv} for a fixed $v\in S$ to distributions $(\alpha_{m_0^{v},m})_m$ and 
        $(\alpha_{m_1^{v},m})_m$, we obtain
        \[
            \frac 1 n\sum_{m\in \Sigma}\sum_{v\in S} |\alpha_{m_1^{v},m} - \alpha_{m_0^{v},m}|
            \leq \frac 1 n 2 \sum_{v\in S} (1-2\delta)
            \leq \frac{2(1-2\delta)\bias}{n}.
        \] 
        Hence, we have $\Pi({\cal{P}})\in
        \AVDT\left(\frac{2(1-2\delta)\bias}{n},\delta\right)$, establishing
        Lemma \ref{lem:subclaim1}.
    \end{proof}  

    Thanks to Claims \ref{claim:subclaim2} and Lemma \ref{lem:subclaim1}, Theorem
    \ref{thm:main_rpull} regarding the $\rpull$ model becomes a direct
    consequence of Theorem \ref{thm:testcoin} on the adaptive coin
    distinguishing task, taking $\varepsilon=\frac {2(1-2\delta)\bias} n~.$
    More precisely, the assumption $\frac{(1-2\delta)}{\delta sn}\leq c$ for
    some small constant $c$, ensures that $\frac{\varepsilon}{\delta} \leq c$
    as required by Theorem \ref{thm:testcoin}. The lower bound
    $\Omega\left(\frac{\varepsilon^2}{\delta}\right)$ corresponds to $\Omega\left(\frac{n^2\delta}{(1-2\delta)^{2}\bias^2}\right)~.$ This
    concludes the proof of Theorem \ref{thm:main_rpull}. 
    \qed
    
    \medskip
    To establish our
    results it remains to prove Theorem \ref{thm:testcoin}. 

\subsection{Proof of Theorem \ref{thm:testcoin} }
\label{sec:aa}

We start by recalling some facts from Hypothesis Testing. First let us
recall two standard notions of (pseudo) distances between probability
distributions. Given two discrete distributions $P_0,P_1$ over a probability space
$\Omega$ with the same support%
\footnote{The assumption that the support is the same is not necessary but it
    is sufficient for our purposes, and is thus made for
    simplicity's sake.
    },
the {\em total variation distance} is defined as $
    TV(P_0,P_1) :=  \frac 12 \sum_{x \in \Omega} \lvert P_0(x) - P_1(x) \rvert,$
and the  Kullback-Leibler divergence $\kl(P_0,P_1)$ is
defined\footnote{
    We use the notation $\log(\cdot)$  to denote the base 2 logarithms, i.e.,
    $\log_2(\cdot)$ and for a probability distribution $P$, use the notation $P(x)$
    as a short for $P(X=x)$.} 
    as $
    \kl(P_0,P_1) := \sum_{x\in \Omega} P_0(x) \log \frac{P_1(x)}{P_0(x)}.$

The following lemma shows that, when trying to discriminate between
distributions $P_0,P_1$, the total variation relates to the smallest error probability we
can hope for.

\begin{lemma}[{{Neyman-Pearson \cite[Lemma $5.3$ and Proposition $5.4$]{rig15}}}]
    \label{lem:neyman}
    Let $P_0,P_1$ be two distributions. Let $X$ be a random variable
    of law either $P_0$ or $P_1$. Consider a (possibly probabilistic) mapping
    $f :\Omega\rightarrow\{0,1\}$ that attempts to ``guess'' whether
    the observation $X$ was drawn from $P_0$ (in which case
    it outputs $0$) or from $P_1$ (in which case it outputs $1$). 
    Then, we have the following lower bound,
    \begin{equation}
        \text{\ensuremath{{\textstyle \Pr_{0}}}}\left(f(X)=1\right) 
        + \text{\ensuremath{{\textstyle \Pr_{1}}}}\left(f(X)=0\right) 
            \geq 1-TV(P_0,P_1).
    \end{equation}
\end{lemma}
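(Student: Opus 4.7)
The plan is to reduce to deterministic tests and then invoke the set-based (variational) characterization of the total variation distance. Since this is a classical textbook result, there is no real obstacle; the only step that deserves care is the passage from a deterministic $f$ to a randomized one, which is handled by conditioning on the internal coins.

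First I would treat the deterministic case. Define the acceptance region $A = \{x \in \Omega : f(x) = 1\}$, so that $P_0(f(X)=1) = P_0(A)$ and $P_1(f(X)=0) = 1 - P_1(A)$. Summing,
\begin{equation*}
P_0(f(X)=1) + P_1(f(X)=0) = 1 - \bigl(P_1(A) - P_0(A)\bigr) \geq 1 - \bigl|P_1(A) - P_0(A)\bigr|,
\end{equation*}
so it suffices to show $|P_1(A) - P_0(A)| \leq TV(P_0, P_1)$ for every measurable $A$. This is the standard identification between the $\ell_1$ definition of $TV$ and the supremum over events: setting $\Sigma_+ = \{x : P_1(x) > P_0(x)\}$, for any $B \subseteq \Omega$ one has $P_1(B) - P_0(B) \leq P_1(B \cap \Sigma_+) - P_0(B \cap \Sigma_+) \leq P_1(\Sigma_+) - P_0(\Sigma_+)$, and symmetrically with roles of $P_0, P_1$ swapped. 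Hence
\begin{equation*}
\sup_{B \subseteq \Omega}\bigl|P_1(B) - P_0(B)\bigr| = P_1(\Sigma_+) - P_0(\Sigma_+) = \tfrac{1}{2}\sum_{x \in \Omega}\bigl|P_0(x) - P_1(x)\bigr| = TV(P_0, P_1),
\end{equation*}
where the last equality is exactly the cancellation carried out in the proof of Claim \ref{claim:tv}. This closes the deterministic case.

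To handle a randomized $f$, I would decouple the internal randomness by writing $f(X) = \tilde f(X, R)$ where $R$ is independent of $X$ and $\tilde f(\cdot, r)$ is deterministic for each realization $r$. Applying the deterministic bound to $\tilde f(\cdot, r)$ yields, for every $r$,
\begin{equation*}
P_0\bigl(\tilde f(X, r) = 1\bigr) + P_1\bigl(\tilde f(X, r) = 0\bigr) \geq 1 - TV(P_0, P_1).
\end{equation*}
Taking expectation with respect to the law of $R$ and using independence of $R$ from $X$ (Fubini), the left-hand side becomes $P_0(f(X) = 1) + P_1(f(X) = 0)$ while the right-hand side, not depending on $r$, is unchanged. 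This establishes the lemma in full generality. The only subtle point is the set-based identity for $TV$ in the first step, but as noted this is essentially a two-line computation already carried out inside Claim \ref{claim:tv}, so I would either cite it directly or reproduce the short argument inline.
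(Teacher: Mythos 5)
Your proof is correct. Note that the paper does not prove this lemma at all --- it is imported verbatim from the cited reference (Lemma 5.3 and Proposition 5.4 of [rig15]) --- so there is no in-paper argument to compare against; what you have written is the standard textbook derivation (reduce to a deterministic test via its acceptance region, bound $\lvert P_1(A)-P_0(A)\rvert$ by the variational characterization of total variation, then average over the test's internal coins), and every step checks out, including the identity $\sup_B \lvert P_1(B)-P_0(B)\rvert = \tfrac12\sum_x\lvert P_0(x)-P_1(x)\rvert$, which is the same cancellation used inside the paper's Claim \ref{claim:tv}.
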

The total variation is related to the $\kl$ divergence by the following inequality.
\begin{lemma}[{{Pinsker \cite[Lemma $5.8$]{rig15}}}]
    \label{lem:pinsker}
    For any two distributions $P_0,P_1$,
    \begin{equation}
        TV(P_0,P_1) \leq \sqrt{ KL\left(P_0,P_1\right)}.
    \end{equation}
\end{lemma}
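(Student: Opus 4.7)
The plan is to prove Pinsker's inequality in two steps: first reduce the statement for arbitrary distributions to the case of Bernoulli distributions, then handle that case by a direct calculus argument.

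For the reduction, let $A := \{x \in \Omega : P_0(x) \geq P_1(x)\}$, and note that by the standard characterization of total variation one has $TV(P_0, P_1) = P_0(A) - P_1(A)$. Set $p := P_0(A)$ and $q := P_1(A)$, so that $TV(P_0, P_1) = p - q$. I would then invoke the data-processing inequality (DPI) for KL divergence, applied to the indicator map $\mathbf{1}_A : \Omega \to \{0, 1\}$: this yields $KL(P_0, P_1) \geq KL(\mathrm{Ber}(p), \mathrm{Ber}(q))$. The DPI itself is a consequence of the log-sum inequality, which in turn follows from convexity of $t \mapsto t \log t$ combined with Jensen's inequality applied over the preimages of the map. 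After this step it suffices to prove the Bernoulli version of Pinsker's inequality.

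For the Bernoulli case, I want to show that for $p, q \in (0, 1)$,
\[
KL(\mathrm{Ber}(p), \mathrm{Ber}(q)) \;\geq\; (p - q)^2.
\]
Fixing $q$ and viewing the left-hand side as a function $\phi(p)$, a direct computation yields $\phi(q) = 0$, $\phi'(p) = \log_2\frac{p(1-q)}{q(1-p)}$ (so $\phi'(q) = 0$), and $\phi''(p) = \frac{1}{(\ln 2)\, p(1-p)} \geq \frac{4}{\ln 2} > 2$ on $(0, 1)$. Defining $g(p) := \phi(p) - (p - q)^2$, we then have $g(q) = g'(q) = 0$ and $g''(p) > 0$, which by Taylor's theorem with remainder (or by integrating $g''$ twice from $q$) forces $g \geq 0$ on $(0, 1)$. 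Combining this with the reduction gives $TV(P_0, P_1)^2 = (p - q)^2 \leq KL(P_0, P_1)$, as claimed.

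The only non-elementary input in this plan is the data-processing inequality; once it is available, both the reduction and the Bernoulli computation are routine, so I expect that to be the main obstacle (though entirely standard). Note that the stated form $TV \leq \sqrt{KL}$ is slightly weaker than the sharp Pinsker bound $TV \leq \sqrt{(\ln 2)/2 \cdot KL}$ that the above argument actually produces under the base-$2$ logarithm convention used in the excerpt, so there is comfortable slack in the constants and no delicate numerical accounting is needed.
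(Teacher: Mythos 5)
Your proof is correct, but there is nothing in the paper to compare it against: the paper does not prove Lemma \ref{lem:pinsker} at all, it simply imports it from Rigollet's lecture notes \cite[Lemma 5.8]{rig15}. Your write-up therefore supplies a self-contained argument where the paper relies on a citation, and the argument you give is the standard one. The reduction via $A=\{x: P_0(x)\geq P_1(x)\}$, the identity $TV(P_0,P_1)=P_0(A)-P_1(A)$, and the data-processing inequality for the map $\mathbf{1}_A$ (a consequence of the log-sum inequality, as you say) are all sound; the Bernoulli step is also correct, since $\phi(q)=\phi'(q)=0$ and $\phi''(p)=\frac{1}{(\ln 2)\,p(1-p)}\geq \frac{4}{\ln 2}>2$ force $KL(\mathrm{Ber}(p),\mathrm{Ber}(q))\geq (p-q)^2$ by the second-order Taylor remainder, whence $TV\leq\sqrt{KL}$. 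Your remark about constants is accurate: the same computation in fact yields the sharp $TV\leq\sqrt{\tfrac{\ln 2}{2}KL}$ under the base-$2$ convention, so the stated form holds with comfortable slack. Two minor points worth noting. First, the paper's displayed definition of the divergence has the ratio inverted ($\log\frac{P_1(x)}{P_0(x)}$ instead of $\log\frac{P_0(x)}{P_1(x)}$), which as literally written would make $KL$ non-positive and the square root meaningless; this is evidently a typo, and your proof correctly uses the standard orientation. Second, your calculus argument is stated for $p,q\in(0,1)$; the boundary cases are degenerate (under the paper's equal-support assumption $q\in\{0,1\}$ forces $p=q$ and $TV=0$, and otherwise $KL=\infty$), so no generality is lost, though a one-line remark to that effect would make the proof airtight.
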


We are now ready to prove the theorem. 

\begin{proof}[Proof of Theorem \ref{thm:testcoin}]
    Let us define
    $\text{\ensuremath{{\textstyle \Pr_{\eta}}}}\left(\ \cdot\ \right)=\Pr\left(\
    \cdot\cond\text{``correct distribution is $\eta$''}\right)$
    for $\eta\in \left\{ 0,1 \right\}$.
    We denote $P^{(\leq t)}_\type$, $\type \in \left\{ 0,1 \right\}$, the two
    possible distributions of $\Seqobs$. 
    We refer to  $P_0^{(\leq t)}$ as the distribution of
    \emph{type $0$} and to $P_1^{(\leq t)}$ as the distribution of \emph{type $1$}. Furthermore,
    we define the \emph{correct type} of a sequence of observations $\Seqobs$ to be 
    $0$ if the observations are sampled from $P^{(\leq t)}_0$, and to be $1$ if they are
    sampled from $P^{(\leq t)}_1$. 

    After
    $t$ observations $\seqobs = (\obs 1, \dots, \obs t)$ we have to decide
    whether the distribution is of type $0$ or $1$. 
    Our goal is to
    maximize the probability of guessing the type of the distribution, observing $\Seqobs$,
    which means that we want to minimize
\begin{equation}
        \label{eq:total}
        f
        =\sum_{\eta\in\{ 0,1 \}}
       \Pr_{\eta}\left(f(\Seqobs)=1-\eta\right)
         \Pr\left(\text{``correct type is $\eta$''} \right). 
\end{equation}
    Recall that the correct type is either $0$ or $1$ with probability $\frac 12$. 
    Thus, the error probability described in \eqref{eq:total} becomes
    \begin{equation}
         \frac 12 \Pr_{0}\left(f(\Seqobs)=1\right)
            + \frac 12 
             \Pr_{1}
            \left(f(\Seqobs)=0\right).
            \label{eq:after_prior}
    \end{equation}

    By combining Lemmas \ref{lem:neyman} and  \ref{lem:pinsker} with $X =
    \Seqobs$ and $P_\type= P_{\type}^{(\leq t)}$ for $\type=0,1$,
     we get the following Theorem. Although for convenience we think of $f$ as a deterministic
     function, it could in principle be randomized.
    \begin{theorem}
        \label{thm:aux}
        Let $f$  be any guess function. Then
        \begin{align}
         \Pr_{0}\left(f
            t(\Seqobs)=1\right) + \Pr_{1}\left(f(\Seqobs)=0\right) \geq 1- \sqrt{
            KL\left(P_0^{(\leq t)}, P_1^{(\leq t)}\right)}.
            \label{eq:lowerKL}
        \end{align}
    \end{theorem}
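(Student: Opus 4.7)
The plan is to derive Theorem \ref{thm:aux} as an immediate corollary of the two lemmas just recalled, applied to the random observation vector $X = \Seqobs$ with candidate distributions $P_\eta = P_\eta^{(\leq t)}$ for $\eta \in \{0,1\}$. Both lemmas are stated for arbitrary pairs of distributions on a common space, so the work is purely a matter of plugging in the right objects.

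The key steps are two. First, I would invoke the Neyman--Pearson bound (Lemma \ref{lem:neyman}) for the guessing function $f$, which yields directly
\[
\Pr_0(f(\Seqobs) = 1) + \Pr_1(f(\Seqobs) = 0) \geq 1 - TV(P_0^{(\leq t)}, P_1^{(\leq t)}).
\]
Second, I would apply Pinsker's inequality (Lemma \ref{lem:pinsker}) to the same pair of joint distributions to obtain $TV(P_0^{(\leq t)}, P_1^{(\leq t)}) \leq \sqrt{KL(P_0^{(\leq t)}, P_1^{(\leq t)})}$, and substitute this bound into the right-hand side of the first inequality. The stated conclusion then reads off in one line.

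The only verifications I would flag are that $P_0^{(\leq t)}$ and $P_1^{(\leq t)}$ are well-defined joint distributions on $\Sigma^t$, which follows from the adaptive specification of Definition \ref{def:acdt} via the chain rule $P_\eta(\seqobs) = \prod_{s=1}^{t} P_\eta(x^{(s)} \mid x^{(<s)})$, and that they share the same support so the $KL$ term is well defined (which is guaranteed by the $\delta$-lower bound on all entries appearing with nonzero $\varepsilon(m,x^{(<t)})$, together with the fact that on the complement the two conditionals coincide). The randomized case for $f$ is already absorbed in the statement of Lemma \ref{lem:neyman}, so no extra argument is required, matching the remark the authors append immediately after the theorem. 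Given the directness of the reduction, I do not foresee any real obstacle here: the substance of the argument lies entirely in the downstream step of controlling $KL(P_0^{(\leq t)}, P_1^{(\leq t)})$ via the chain rule and the per-step constraints of $\AVDT(\varepsilon,\delta)$, which is where the nontrivial work of Theorem \ref{thm:testcoin} will happen.
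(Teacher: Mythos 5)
Your proposal is correct and matches the paper's own derivation exactly: the paper obtains Theorem \ref{thm:aux} by combining Lemma \ref{lem:neyman} (Neyman--Pearson) and Lemma \ref{lem:pinsker} (Pinsker) with $X=\Seqobs$ and $P_\eta = P_\eta^{(\leq t)}$, just as you do. Your extra remarks on well-definedness and common support are sound but not spelled out in the paper, which treats the combination as immediate.
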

Theorem \ref{thm:aux} implies that 
 for the probability of error to be small, it must be the case that the term $KL\left(P_0^{(\leq t)}, P_1^{(\leq t)}\right)$ is large. 
 Our next goal is therefore to show that in order to make this term large, $t$ must be large. 
    
    Note that $P_\type^{(\leq T)}$ for $\type \in \{0, 1\}$ cannot be written
    as the mere product of the marginal distributions of the $X^{(t)}$s, since the
    observations at different times may not necessarily be independent. Nevertheless, we can still express the
    term $
      \kl(P_0^{(\leq T)},P_1^{(\leq T)})$
       as a sum, using the Chain Rule for $\kl$ divergence\footnote{See Lemma $3$ in
      \url{http://homes.cs.washington.edu/anuprao/pubs/CSE533Autumn2010/lecture3.pdf}.}. 
      It yields
    \begin{align}
        \kl(P_0^{(\leq T)},P_1^{(\leq T)}) &= \sum_{t\leq T}
        \kl(P_0(x^{(t)}\mid x^{(<t)}),P_1(x^{(t)} \mid x^{(<t)}))
        \label{eq:KL}\\
        &:= \sum_{x^{(<t)} \in \Sigma^{t-1}} P_0(x^{(<t)})
            \sum_{x^{(t)}\in \Sigma}
                P_0(x^{(t)}\mid x^{(<t)}) \log \frac{P_0(x^{(t)}\mid
                x^{(<t)})}{P_1(x^{(t)} \mid x^{(<t)})}.\nonumber\\
        &= \sum_{x^{(<t)} \in \Sigma^{t-1}} P_0(x^{(<t)})
            \sum_{ m \in \Sigma}
                P_0( X_{0}^{(t)} = m \mid x^{(<t)}) \log \frac{P( X_{0}^{(t)} = m \mid
                x^{(<t)})}{P( X_{1}^{(t)} = m \mid x^{(<t)})}.
        \label{eq:condKL}
    \end{align}
    Since we are considering an instance of  
    $\AVDT\left( \epsilon,\delta\right)$, we have 
    \begin{itemize}
        \item $\epsnorm =\sum_{m \in \Sigma}|\varepsilon(m, x^{(< t)})| \leq \epsilon$, and         
        \item for every $m\in \Sigma$ such that $\varepsilon(m, x^{(< t)}) \neq 0$, it holds that $\delta
           \leq  \Pr_\type( X_{0}^{(t)} = m \mid x^{(< t)})$ for $\type\in \left\{ 0,1 \right\}$.
    \end{itemize}
    We make use of the previous facts to upper bound the \kl~divergence terms in the
    right hand side of \eqref{eq:condKL}, as follows.
    
    %
    \begin{align}
        &\kl(P_0(x^{(t)}\mid x^{(<t)}),P_1(x^{(t)} \mid x^{(<t)}))\\  
        &= \sum_{x^{(<t)} \in \Sigma^{t-1}} P_0(x^{(<t)}) \sum_{m\in \Sigma}
            \left( \Pr(X_{0}^{(t)} = m \mid x^{(< t)}) 
                \log \frac{\Pr(X_{0}^{(t)} = m \mid x^{(< t)})}
                        {\Pr(X_{0}^{(t)} = m \mid x^{(< t)})+ \varepsilon(m, x^{(< t)})} \right)\\
        &= - \sum_{x^{(<t)}} P_0(x^{(<t)}) \sum_{m\in \Sigma}
            \left( \Pr(X_{0}^{(t)} = m \mid x^{(< t)}) 
                \log\left(1+\frac{\varepsilon(m, x^{(< t)})}{\Pr(X_{0}^{(t)} = m \mid x^{(< t)})}\right) \right).
        \label{eq:innerlog}
    \end{align}
    Recall that we assume $
        \frac{\varepsilon(m, x^{(< t)})}{\Pr(X_{0}^{(t)} = m \mid x^{(<
        t)})} 
        \leq \frac{\varepsilon(m, x^{(< t)})}{\delta}
        \leq \frac{\varepsilon}{\delta}.$
    We make use of the following claim, which follows from the Taylor expansion
    of $\log(1+u)$ around $0$. 
    \begin{claim}\label{claim:taylor}
        Let $x \in [-a , a]$ for some $a\in (0,1)$. Then
            $ \lvert \log(1+x) - x + x^2 / 2 \rvert \leq \frac{x^3}{3(1-a)^3}$.
    \end{claim}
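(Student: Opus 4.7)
The plan is to apply Taylor's theorem with the Lagrange form of the remainder to the function $g(x) := \log(1+x)$ expanded about $0$. I read $\log$ here as the natural logarithm, since that is what produces the exact constant $1/3$ in the stated bound; for base-$2$ logarithms an extra factor of $1/\ln 2$ appears and is absorbed into the constants of Theorem~\ref{thm:testcoin}. A routine differentiation yields $g(0)=0$, $g'(0)=1$, $g''(0)=-1$, and $g'''(x) = 2/(1+x)^3$.

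For any $x \in [-a,a]$ with $a \in (0,1)$, Taylor's theorem then produces an intermediate point $\xi$ strictly between $0$ and $x$ such that
\[
    g(x) \;=\; x \;-\; \frac{x^2}{2} \;+\; \frac{g'''(\xi)}{6}\, x^3 \;=\; x \;-\; \frac{x^2}{2} \;+\; \frac{x^3}{3(1+\xi)^3}.
\]
The key step is then to bound the denominator $(1+\xi)^3$ from below. Since $\xi$ lies in the interval between $0$ and $x$, we have $|\xi|\le|x|\le a<1$, so $1+\xi \ge 1-a > 0$; cubing this positive inequality preserves its direction, giving $(1+\xi)^3 \ge (1-a)^3$. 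Rearranging the Taylor identity and taking absolute values on both sides,
\[
    \bigl|\log(1+x) - x + x^2/2\bigr| \;=\; \frac{|x|^3}{3(1+\xi)^3} \;\leq\; \frac{|x|^3}{3(1-a)^3},
\]
which is the claimed inequality (reading $x^3$ on the right-hand side of the claim as $|x|^3$, which is all that is used when chaining into Equation~\eqref{eq:innerlog}).

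This is essentially a single invocation of Taylor--Lagrange, so no serious obstacle is anticipated. The one point requiring care is handling both signs of $x$ uniformly: the assumption $a<1$ is precisely what guarantees $1-a>0$, and hence that cubing preserves the inequality and that the remainder $x^3/(3(1+\xi)^3)$ remains finite. I note in passing that a strictly sharper bound $|x|^3/(3(1-a))$ can be obtained by integrating the identity $1/(1+t)-1+t = t^2/(1+t)$ from $0$ to $x$, but the weaker $(1-a)^3$ form produced by Taylor--Lagrange is already sufficient for controlling the Kullback--Leibler terms in \eqref{eq:condKL}.
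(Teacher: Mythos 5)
Your proof is correct and follows exactly the route the paper intends — the paper offers no written proof beyond the remark that the claim ``follows from the Taylor expansion of $\log(1+u)$ around $0$,'' and your Taylor--Lagrange argument with the bound $(1+\xi)^3 \geq (1-a)^3$ is precisely that computation carried out in full. Your two side observations are also apt: the statement's right-hand side should indeed read $\lvert x\rvert^3$ rather than $x^3$, and the claim must be about the natural logarithm, with the $1/\ln 2$ factor applied separately as the paper does in \eqref{eq:taylor}.
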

   Using Claim \ref{claim:taylor} with $a=\frac \epsilon\delta$, we can bound 
    the inner sum appearing in
    \eqref{eq:innerlog} from above and below with 
    \begin{equation}
        \frac{1}{\ln 2} \sum_{m \in \Sigma}
            \left({\varepsilon(m, x^{(< t)})} -
            \frac{1}{2}\frac{(\varepsilon(m, x^{(< t)}))^2}{\Pr(X_{0}^{(t)} = m \mid x^{(< t)})} \pm
        \frac{\delta^3}{3(\delta - \epsilon)^3} \left( \frac{(\varepsilon(m, x^{(< t)}))^3}{\Pr(X_{0}^{(t)} = m
        \mid x^{(< t)})^2} \right) \right).
        \label{eq:taylor}
    \end{equation}

    Since $\sum_{m} \lvert \varepsilon(m, x^{(< t)}) \rvert \leq \varepsilon$,
    we also have that $\sum_m{(\varepsilon(m, x^{(< t)}))^2} \leq \epsilon^2$.
    The latter bound, together with the fact that
    $\Pr(X_{0}^{(t)} = \tilde m \mid x^{(< t)})\geq \delta$
    for any $\tilde m\in \Sigma$ such that $\varepsilon(\tilde m, x^{(<
    t)})\neq 0$,
    implies
    \begin{equation}\label{eq:flower}
        \sum_m\frac{(\varepsilon(m, x^{(< t)}))^2}{\Pr(X_{0}^{(t)} = m \mid x^{(< t)})}
        \leq \frac{\varepsilon^2}{\delta}~.
    \end{equation}
    Finally, we can similarly bound the term $\sum_{m \in \Sigma}
    \left({(\varepsilon(m, x^{(< t)}))^3}/{\Pr(X_{0}^{(t)} = m \mid x^{(<
    t)})^2}\right)$ with
    \begin{equation}\label{eq:power}
        \sum_{m \in \Sigma}
        \left({(\varepsilon(m, x^{(< t)}))^3}/{\Pr(X_{0}^{(t)} = m \mid x^{(<t)})^2}\right)
        \leq  \frac{\varepsilon^3}{\delta^2}.
    \end{equation}

    Recall that $\sum_{m}{\varepsilon(m, x^{(< t)})} =0$, thus the first term
    in \eqref{eq:taylor} disappears. 
    Hence, 
    substituting the bounds \eqref{eq:flower} and \eqref{eq:power} in \eqref{eq:taylor}, 
    we have 
    \begin{align}
        \left|
            \log\left(1+\frac{\varepsilon(m, x^{(< t)})}{\Pr(X_{0}^{(t)} = m \mid x^{(< t)})}\right)
        \right|
        &\leq \frac{1}{\ln 2} 
            \left( \frac{1}{2}\frac{\varepsilon^2}{\delta}  +
            \frac{\delta\epsilon^3}{3(\delta - \epsilon)^3}
            \right)\nonumber
            \\
        &
        \leq    
        \frac{1}{\ln 2} 
            \left( \frac{1}{2} +
            \frac{\delta^2\epsilon}{3(\delta - \epsilon)^3}
            \right)
            \frac{\varepsilon^2}{\delta}.
        \label{eq:upper}
    \end{align}
    %
    %
    %
    If we define the right hand side \eqref{eq:upper} to be $W(\epsilon, \delta)$ and we
    %
        substitute the previous bound in \eqref{eq:innerlog}, we get
    \begin{equation}
        \kl(P_0(x^{(t)}\mid x^{(<t)}),P_1(x^{(t)} \mid x^{(<t)})) 
        \leq 
        W(\epsilon, \delta),
        \label{eq:bouding}
    \end{equation}
    and combining the previous bound with \eqref{eq:KL}, we can finally
    conclude that for any integer $T$, we have $\kl(P_0^{(\leq T)},P_1^{(\leq T)})  \leq T \cdot W(\epsilon, \delta).$
    Thus, from Theorem \ref{thm:aux} and the latter bound, it follows that the
    error under a uniform prior of the source type, as defined in
    \eqref{eq:after_prior}, is at least
    \begin{align}
       \frac 12
            \text{\ensuremath{{\textstyle \Pr_{0}}}}\left(f
                t(\Seqobs)=1\right) + \frac 12\text{\ensuremath{{\textstyle
                \Pr_{1}}}}\left(f(\Seqobs)=0\right) 
            &\geq \frac 12 - \frac 12
                \sqrt{\kl(P_0^{(\leq T)},P_1^{(\leq T)})}\\
            &\geq \frac 12 - \frac 12 \sqrt{T \cdot W(\epsilon, \delta)}.
        \label{eq:anapple}
    \end{align}
    Hence, the
    number of samples $T$ needs to be greater than 
$
        \frac 19 \frac{1}{W(\epsilon, \delta)}
        = 
        \frac {\ln 2}9 
            \left( 
            \frac{6(\delta - \epsilon)^3}{\delta^3 - \delta^2\epsilon +3 \delta
            \epsilon^2 -\epsilon^3}
            \right)
            \frac{\delta}{\varepsilon^2}$
    to allow the possibility that the error be
    less than $1/3$. 

    In particular, if we assume that ${10}\epsilon< \delta$, then we can bound
$\frac{\delta^2\epsilon}{3(\delta - \epsilon)^3} \leq \frac{\delta^3}{10}
        \cdot \frac{1}{3(9/10)^3 \delta^3}
        \leq \frac{100}{2187}.$
    It follows that \eqref{eq:upper} can be bounded with $ W(\varepsilon, \delta) \leq \frac{1}{\ln 2}\left(\frac{1}{2} +
        \frac{100}{2187}\right) \leq 0.79 \,,$
    and so $\frac 19
        \frac{1}{W(\epsilon, \delta)} \geq 0.14 \cdot \frac{\delta}{\epsilon^2}
        = \Omega\left(\frac{\delta}{\epsilon^2}\right).$
    This completes the proof of Theorem \ref{thm:testcoin} and hence of
    Theorem \ref{thm:main_rpull}.
\end{proof}

\bibliography{Biblio}

\appendix

\section{Methods}\label{app:exp}
All experimental results presented in this manuscript are re-analysis of data obtained in \emph{Cataglyphis niger} recruitment experiments \cite{Razin}. In short, ants in the entrance chamber of an artificial nest were given access to a tethered food item just outside the nest's entrance (Fig \ref{fig:razin}a). 
The inability of the ants to retrieve the food induced a recruitment process \cite{Razin}. \\The reaction of the ants to this manipulation was filmed and the locations, speeds and interactions of all participating ants were extracted from the resulting videos. \\
\noindent {\bf Calculation of $\delta$.}
To estimate the parameter $\delta$ we used interactions between ants moving at four different
speed ranges (measured in ${cm}/{sec}$), namely, `a': 0-1, `b': 1-5, `c': 5-8, and `d': over 8, and stationary ``receiver''
ants where used. The message alphabet is then assumed to be $\Sigma=\{a,b,c,d\}$. The response of a stationary ant $v$ to the interaction was quantified in terms of her speed after the interaction. 
Assuming equal priors to all messages in $\Sigma$, and given specific
speed of the receiver ant, $v$, the probability that it was the result of a specific message $i\in \Sigma$ was
calculated as $p_i(v)={p(v\mid i)}/{\sum_{k\in \Sigma} p(v\mid k)}$, where ${p(v\mid j)}$ is the probability of responding in speed $v$ after ``observing''  $j$. The probability $\delta (i,j)$ that message $i$
was perceived as message $j$ was then estimated as the weighted sum over the entire
probability distribution measured as a response to $j$: 
$\delta (i,j) = \sum_v p(v\mid j) \cdot p_i(v)$. 
The parameter $\delta$ can then be calculated using 
$\delta =\min\{\delta (i,j)\mid i,j\in\Sigma\}$.

\section{Proof of Corollary \ref{cor:informal_reliable_source}}\label{app:cor_source}
    Let us start with the first item of the corollary, namely the lower bound
    in the $\spull$ model. For any step $t$, let $S(t)$ denote the set of
    sources together with the agents that have directly observed at least one
    of the sources at some point up to time $t$. We have $S=S(0)\subseteq
    S(1)\subseteq S(2)\subseteq \ldots$. The size of the set $S(t)$ is a random variable
    which is expected to grow at a moderate speed. Specifically,   letting $s'=\frac{11}{10}\cdot s \cdot T$, we obtain: 
    \begin{claim}
        \label{claim:chernoff}
       With probability at least $1-n^{-10}$, we have $\lvert S(T)\rvert \leq  s'$.
    \end{claim}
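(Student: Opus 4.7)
The proof is a straightforward Chernoff-type concentration bound on the number of direct source observations. The key structural fact I would exploit is that, because the \spull (and likewise the \ppull) model specifies that the observed agent is always drawn \emph{uniformly at random by the model itself}, no protocol has any control over whom an observer meets. Consequently, for every step $t\in\{1,\dots,T\}$ the indicator ``the observed agent at step $t$ lies in $S$'' is a $\mathrm{Bernoulli}(s/n)$ random variable that is independent of the entire history and of the sequence of random bits $r$ used by the agents.

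Let $B_t := |S(t)| - |S(t-1)|$. Because at most one agent observes a source per step in \spull, and because an observation of a source can enlarge $S$ by at most one (it may add nothing if the observer was already in $S(t-1)$), we have $B_t \le \mathbf{1}_{v_t \in S}$, where $v_t$ denotes the observed agent at step $t$. Summing gives
\[
    |S(T)| \;\le\; s + \sum_{t=1}^{T} \mathbf{1}_{v_t\in S},
\]
and the right-hand sum is stochastically dominated by a random variable $Y \sim \mathrm{Bin}(T, s/n)$ with mean $\mu = sT/n$. (The stochastic domination is what makes the adaptive behavior of the protocol irrelevant: the independence of the $\mathbf{1}_{v_t\in S}$ across $t$ follows from the model, not from the protocol.)

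Next I would apply the multiplicative Chernoff bound $\Pr[Y > (1+\alpha)\mu] \le \exp(-\alpha^2 \mu / 3)$, valid for $\alpha \in (0,1]$, or, when the desired deviation is much larger than the mean, its additive form $\Pr[Y > \mu + \lambda] \le \exp(-\lambda^2/(2(\mu+\lambda/3)))$. Setting the target threshold at $\frac{11}{10}sT$ leaves slack $\lambda \ge \frac{11}{10}sT - sT/n - s = \Theta(sT)$, so the Chernoff tail becomes $\exp(-\Omega(sT))$. In the regime relevant to Corollary \ref{cor:informal_reliable_source}, $T$ is a non-trivial polynomial of $n$, hence $sT \ge \Omega(\log n)$ and the bound is indeed at most $n^{-10}$. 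For values of $T$ so small that $sT = o(\log n)$, the bound $|S(T)| \le \frac{11}{10}sT$ can be checked directly, since $|S(T)| \le s + T$ deterministically and the claim follows trivially from $T\ge 10$.

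\textbf{Main obstacle.} There is essentially none beyond bookkeeping: the only point that requires care is justifying the independence of the per-step source-observation indicators in the presence of an adaptive, shared-randomness protocol. This independence is not a consequence of the protocol but of the model's specification that the observed agent is drawn uniformly at random, fresh at every step, independently of all previously revealed randomness. Once this is noted, the rest is a routine Chernoff computation with generous constants.
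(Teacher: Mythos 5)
Your proof is correct and takes essentially the same route as the paper: dominate the number of agents that have directly observed a source by a binomial random variable and apply a Chernoff bound, concluding with the assumption $sT\geq C\log n$. The only (harmless) differences are that you dominate by $\mathrm{Bin}(T,s/n)$ with a large additive deviation while the paper dominates by $\mathrm{Bin}(nT,s/n)$ with a multiplicative $11/10$ deviation --- both yield the same $\exp(-\Omega(sT))$ tail --- and you make explicit the (correct) observation that the per-step source-observation indicators are independent of the protocol's adaptive behavior, which the paper leaves implicit.
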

    \begin{proof}[Proof of Claim \ref{claim:chernoff}]
        The variable $S(T)$  may be written as a sum of indicator variables 
        \begin{align*}
            S(T) 
            &= \sum_{i=1}^n \mathds{1}(\text{Agent $i$ observed at least one 
                source before step $t$})\\
            &\leq \sum_{i=1}^n \sum_{r\leq T} \mathds{1}(\text{Agent $i$ observes
                a source on step $r$}).
        \end{align*}
        This last expression is a sum of $n\cdot T$ independent Bernoulli
        variables with parameter $s/n$. In other terms, it is a binomial
        variable with probability $s/n$ and $T \cdot n$ trials. By a standard
        Chernoff bound the probability that it deviates by a multiplicative
        factor $\frac{11}{10}$ from its mean $s\cdot T$ is less than
        $\exp(-\Omega(sT)) \leq  n^{-10}$. The last bound holds because we assume $sT\geq C \log n$ for some large enough constant $C$.
    \end{proof}
    \newcommand{\event}{\mathcal E}
    \newcommand{\prob}{\rho}
    Denote by $\event$ the event that $|S(t)|\leq s'$ for
    every $t\leq T$. Using Claim \ref{claim:chernoff}, we know that
    $\Pr(\event) \geq 1-n^{-10}$.
    Our goal next is to prove that the probability $\prob$ that a given agent
    correctly guesses the correct opinion is low for any given time $t\leq c
    T$, where $c$ is a small constant. For this purpose, we condition on
    the highly likely event $\event$. Removing this conditioning will amount to
    adding a negligible term (of order at most $n^{-10}$) to $\prob.$
   
    In order to bound $\prob$, we would like to invoke Theorem
    \ref{thm:main_rpull} with the number of sources upper bounded by
    $s'$. Let us explain why it applies in this context. To begin with, we may
    adversarially assume (from the perspective of the lower bound) that all agents in $S(t)$ learn the value of the correct
    bit to spread. Thus, they essentially become ``sources'' themselves. In
    this case the number of sources varies with time, but the proof of Theorem
    \ref{thm:main_rpull}
    can easily be shown to cover this case as long as $s$ (i.e.,
    $s'$ here) is an upper bound on the number of sources at all times. We can
    therefore safely apply Theorem \ref{thm:main_rpull} with $s'$. 
    By the choice of $T$, 
    \[
        T = \Theta\left(\frac{n^{2}\delta}{(s')^2(1-2\delta)^2}\right)~.
    \] 
     Hence, we can set $c$ to be a  sufficiently small constant such that for all times $t \leq cT$,    the probability of guessing correctly, even in this adversarial scenario,
    is less than $1/3$. In other words, we have $\prob\leq 1/3$.
    All together, this yields a lower bound of $\Omega(T)$ on the convergence time.
    \medskip
    
    As for the \ppullk model, the argument is similar. 
    After $T'={T}/{k}$ parallel rounds,
    using a similar claim as Claim \ref{claim:chernoff}, we have that with high
    probability, at most
    $\bigO(ks T')$ agents have directly observed one of the $s$ sources by time $T'$. 
    Applying Theorem \ref{thm:main_rpull} with $s'' =
    \bigO(ks T')= \bigO(sT)$ yields a lower bound (in terms of samples in the broadcast model) of
    \begin{equation}
    \Theta\left(\frac{n^{2}\delta}{(s'')^2(1-2\delta)^2}\right) = 
    \Theta\left(\frac{n^{2}\delta}{s^2 T^2(1-2\delta)^2}\right) = \Theta(T).
    \end{equation}
    The last line follows by choice of $T$.
    Hence $T$ is a lower bound on the number of samples, which is attained in $T'$ rounds of \ppullk model.
\end{document}